\documentclass{sig-alternate}
\usepackage[utf8]{inputenc}
\usepackage[T1]{fontenc}
\usepackage{lmodern}

\usepackage{amssymb,amsmath}
\usepackage{theorem}
\usepackage{graphics}
\usepackage{amsfonts}
\usepackage{bbm,mathrsfs}
\usepackage{amscd}
\usepackage{color}
\usepackage{url}
\usepackage[plainpages=false,pdfpagelabels,colorlinks=true,citecolor=blue,hypertexnames=false]{hyperref}
\usepackage{algorithmic}
\usepackage{algorithm}
\usepackage{xspace}
\usepackage{pst-pdf}

\newcommand{\bN}{\mathbb{N}}

\newcommand{\bK}{\mathbb{K}}
\newcommand{\bZ}{\mathbb{Z}}
\newcommand{\Mat}{\mathcal{M}}
\newcommand{\PM}{\mathsf{M}}
\newcommand{\MM}{\mathsf{MM}}
\newcommand{\Id}{I}
\newcommand{\Row}[2]{#1_{#2,\ast}}
\newcommand{\Col}[2]{#1_{\ast,#2}}

\newcommand{\bigO}{{\mathcal{O}}}
\newcommand{\bigOsoft}{\tilde{\mathcal{O}}}
\newcommand{\diag}{\operatorname{diag}}

\newcommand{\CVM}{\textsf{CVM}\xspace}
\newcommand{\CVTrial}{\textsf{CVTrial}\xspace}
\newcommand{\ProbCV}{\textsf{ProbCV}\xspace}
\newcommand{\DetCV}{\textsf{DetCV}\xspace}
\newcommand{\DBZ}{\textsf{DBZ}\xspace}
\newcommand{\AZ}{\textsf{AZ}\xspace}

\newcommand{\constbal}{\textsf{BalConstr}\xspace}
\newcommand{\constnaive}{\textsf{NaiveConstr}\xspace}
\newcommand{\resolstorj}{\textsf{StorjohannSolve}\xspace}
\newcommand{\resolnaive}{\textsf{NaiveSolve}\xspace}

\newcommand{\Pivot}{\textsf{Pivot}\xspace}
\newcommand{\DBZI}{$\textsf{DBZ}^{\text{I}}$\xspace}
\newcommand{\DBZII}{$\textsf{DBZ}^{\text{II}}$\xspace}
\newcommand{\DBZIII}{$\textsf{DBZ}^{\text{III}}$\xspace}

\newcommand{\MI}{M^{\text{I}}}
\newcommand{\MII}{M^{\text{II}}}
\newcommand{\MIII}{M^{\text{III}}}
\newcommand{\PI}{P^{\text{I}}}
\newcommand{\PII}{P^{\text{II}}}
\newcommand{\PIII}{P^{\text{III}}}
\newcommand{\dI}{d_{\text{I}}}
\newcommand{\qI}{q_{\text{I}}}
\newcommand{\Inv}{\mathsf{Inv}}
\newcommand{\Rot}{\mathsf{Rot}}

\newcommand{\Vect}{\operatorname{span}}
\newcommand{\iterd}[2]{\Delta^{#1}(#2)}
\newcommand{\dec}{\hspace{0.4cm}}
\newcommand{\Elem}{E}

\newcommand{\VJoin}{\operatorname{\sf VJoin}}

\newtheorem{lemma}{Lemma}
\newtheorem{proposition}[lemma]{Proposition}
\newtheorem{theorem}[lemma]{Theorem}
\newtheorem{corollary}[lemma]{Corollary}
\newtheorem{definition}{Definition} 
\newtheorem{conjecture}{Conjecture} 

\arraycolsep2pt

\makeatletter
\renewcommand\floatc@ruled[2]{{\@fs@cfont Algorithm} #2\par}
\makeatother

\begin{document}
\conferenceinfo{ISSAC'13,} {June 26--29, 2013, Boston, Massachusetts, USA.}
\CopyrightYear{2013}
\clubpenalty=10000 
\widowpenalty = 10000
	
\title{Complexity Estimates for Two Uncoupling Algorithms%
\titlenote{Supported in part by the MSR-INRIA Joint Centre.}}

\newfont{\authfntsmall}{phvr at 11pt}
\newfont{\eaddfntsmall}{phvr at 10pt}

\numberofauthors{3}
\author{
\alignauthor
Alin Bostan\\
\affaddr{INRIA (France)}\\
\email{\eaddfntsmall alin.bostan@inria.fr}
\alignauthor
Frédéric Chyzak\\
\affaddr{INRIA (France)}\\
\email{\eaddfntsmall frederic.chyzak@inria.fr}
\alignauthor
Élie de Panafieu\\
\affaddr{LIAFA (France)}\\
\email{\eaddfntsmall depanafieuelie@gmail.com}
}

\maketitle

\begin{abstract} Uncoupling algorithms transform a linear differential system
of first order into one or several scalar differential equations. We examine
two approaches to uncoupling: the cyclic-vector method (\CVM) and the
Danilevski-Barkatou-Z\"urcher algorithm (\DBZ). We give tight size bounds on
the scalar equations produced by \CVM, and design a fast variant of \CVM whose
complexity is quasi-optimal with respect to the output size. We exhibit a
strong structural link between \CVM and \DBZ enabling to show that, in the
generic case, \DBZ has polynomial complexity and that it produces a single
equation, strongly related to the output of \CVM. We prove that algorithm \CVM
is faster than \DBZ by almost two orders of magnitude, and provide
experimental results that validate the theoretical complexity analyses.
\end{abstract}

\vspace{1mm}
\noindent
{\bf Categories and Subject Descriptors:} \\
\noindent I.1.2 [{\bf Computing Methodologies}]: Symbolic and
Algebraic Manipulations --- \emph{Algebraic Algorithms}

\vspace{1mm}
\noindent {\bf General Terms:} Algorithms, Theory.

\vspace{1mm}
\noindent {\bf Keywords:} Danilevski-Barkatou-Z\"urcher algorithm, gauge equivalence, uncoupling, cyclic-vector method, complexity.

\begin{section}{Introduction}

\begin{subsection}{Motivation}

\emph{Uncoupling\/} is the transformation of a linear differential
system of first order, $Y' = M Y$,
for a square matrix~$M$ with coefficients in a rational-function
field~$\bK(X)$ of characteristic zero, into one or several
scalar differential equations $y^{(n)} = c_{n-1} y^{(n-1)} + \dots + c_0 y$,
with coefficients~$c_i$ in~$\bK(X)$. This
change of representation makes it possible to apply algorithms that
input scalar differential equations to systems.

In the present article, we examine two uncoupling algorithms: the
cyclic-vector method (\CVM) \cite{Schlesinger1908,Cope1936,ChKo02} and
the Danilevski-Barkatou-Z\"urcher algorithm (\DBZ)
\cite{Danilevski1937,Barkatou93,Zurcher94,BrPe96}.  While \CVM~always
outputs only one equivalent differential equation, \DBZ~can decompose
the system into several differential equations. This makes us consider
two scenarios: the \emph{generic case\/} corresponds to situations in
which \DBZ~does not split the system into uncoupled equations, whereas
in the \emph{general case}, several equations can be output.

For some applications, getting several differential equations is more
desirable than a single one. Besides, although the complexity of~\CVM
is rarely discussed, its output is said to be ``very complicated'' in
comparison to other uncoupling methods
\cite{Hilali83,Barkatou93,Zurcher94,Abramov99,Gerhold02}. For these
reasons, \CVM~has had bad reputation.  Because of this general belief,
uncoupling algorithms have not yet been studied from the complexity
viewpoint.  The lack and need of such an analysis is however striking
when one considers, for instance, statements like \emph{We tried to
  avoid [...] cyclic vectors, because we do not have sufficiently
  strong statements about their size or complexity\/} in a recent work
on Jacobson forms~\cite{GiesbrechtHeinle2012}.  One of our goals is a
first step towards filling this gap and rehabilitating~\CVM.

\end{subsection}

\begin{subsection}{Contribution}

In relation to the differential system~$Y' = M Y$, a
classical tool of differential algebra is the map~$\delta$, defined at
a matrix or a row vector~$u$ by $\delta(u) = u M +
u'$.  A common objective of both \CVM and \DBZ, explicitly for the
former and implicitly for the latter, as we shall prove in the present
article, is to discover a basis~$P$ of~$\bK(X)^n$ with respect to which
the matrix of the application~$\delta$ is very simple.  This matrix is
the matrix~$P[M]$ defined in \S\ref{sec:comp-mat}.
In contrast with~\CVM, which operates only on~$P$,
\DBZ~operates only on~$P[M]$ by performing pivot
manipulations without considering~$P$.  An important part of our
contribution is to provide an algebraic interpretation of the operations in \DBZ
in terms of transformations of the basis~$P$ (\S\ref{sec:DBZ_algo} and \S\ref{sec:DBZ_subalgos}).

More specifically, we analyse the degree of the outputs from \CVM
and~\DBZ, so as to compare them.  Interestingly enough, an immediate
degree analysis of (the first, generically dominating part of) \DBZ
provides us with a pessimistic exponential growth (\S\ref{sec:naive_analysis}).  We prove that this
estimate is far from tight: the degree growth is in fact only
quadratic (Theorem~\ref{prop:quadratic}).  Surprising simplifications between numerators and
denominators explain the result.  This leads to the first complexity
analysis of \DBZ.  It appears that, in contradiction
to the well-established belief, \DBZ and~\CVM have the same output on
generic input systems (Theorem~\ref{th:genericDBZ}).

With respect to complexity, another surprising contribution of our
work is that both \CVM and~\DBZ have polynomial
complexity in the generic case. Combining results, we design a fast variant of~\CVM (Theorem~\ref{th:CVTcomplexity}). Even more
surprisingly, it turns out that this fast~\CVM has better
complexity ($\approx n^{\theta + 1} d$, for $2 \leq \theta \leq 3$) than~\DBZ ($\approx n^5 d$), when applied to systems
of size~$n$ and degree~$d$. As the output size is proved to be generically
asymptotically proportional to~$n^3 d$, our improvement of~\CVM is
quasi-optimal with respect to the output size.

Another uncoupling algorithm is part of the work in~\cite{AbZi97}.  We
briefly analyse its complexity in \S\ref{sec:az} and obtain the
same complexity bound as for~\DBZ (Theorem~\ref{th:genericAZ}).

Our results remain valid for large positive characteristic.  They are
experimentally well corroborated in \S\ref{sec:implementation}.

\end{subsection}
\begin{subsection}{Previous work}

Uncoupling techniques have been known for a long time; \CVM can be traced back at
least to Schlesinger~\cite[p.~156--160]{Schlesinger1908}.
Loewy~\cite{Loewy1913,Loewy1918} was seemingly the first to prove that every
square matrix over an ordinary differential field of characteristic zero is
gauge-similar to a companion matrix.

That a linear system of first order is equivalent to \emph{several\/} scalar
linear equations is a consequence of Dickson's~\cite[p.~173--174]{Dickson1923}
and Wedderburn's~\cite[Th.~10.1]{Wedderburn1932} algorithmic results on the
Smith-like form of matrices with entries in non-commutative domains; see also~\cite[Chap
III, Sec. 11]{Poole1936}. Its refinement nowadays called
\emph{Jacobson form}~\cite[Th.~3 \& 4]{Jacobson1937} implies equivalence to
a \emph{single\/} scalar equation. This approach was further explored
in~\cite[\S6]{DwRo80} and~\cite{Dabeche79,Adjamagbo88}.

Cope~\cite[\S6]{Cope1936} rediscovered Schlesinger's \CVM, and additionally
showed that for a system of~$n$ equations over~$\bK(X)$, one can always choose
a polynomial cyclic vector of degree less than~$n$ in~$X$. A generalisation to
arbitrary differential fields was given in~\cite[\S7]{ChKo02}. The subject of
differential cyclic vectors gained a renewed interest in the 1970's, starting
from Deligne's non-effective proof~\cite[Ch.~II, \S1]{Deligne1970}. An
effective version, with no restriction on the characteristic, was given
in~\cite[\S3]{ChKo02}. \CVM has bad reputation, its output being said to
have~\emph{very complicated
coefficients\/}~\cite{Hilali83,Barkatou93,Abramov99}. However, ad-hoc
examples apart, very few degree bounds and complexity analyses are
available~\cite[\S2]{Cluzeau03}. The few complexity results we are aware
of~\cite{ChKo02,Gerhold02} only measure operations in $\bK(X)$, and do not
take degrees in~$X$ into account.

Barkatou~\cite{Barkatou93} proposed an alternative uncoupling method
reminiscent of Danilevski's algorithm for computing weak Frobenius
forms~\cite{Danilevski1937}. Danilevski's algorithm was also
generalised by Z{\"u}rcher~\cite{Zurcher94}; see
also~\cite[\S5]{BrPe96}. This is what we call~\DBZ, for Danilevski,
Barkatou, and Z{\"u}rcher. Various uncoupling algorithms are
described and analysed in~\cite{Gerhold02,Pech09}, including the already
mentioned algorithm from~\cite{AbZi97}.

\end{subsection}
\begin{subsection}{Notation and Conventions}
\medskip \noindent {\bf Algebraic Structures and Complexity.}
Let $\bK$ denote a field of characteristic zero, $\bK_d[X]$ the set of
polynomials of degree at most $d$, and $\Mat_n(S)$ and
$\Mat_{1,n}(S)$, respectively, the sets of square matrices of
dimension~$n \times n$ and of row vectors of dimension~$n$, each with
coefficients in some set~$S$.
The \emph{arithmetic size\/} of a matrix in $\Mat_n(\bK(X))$ is the
number of elements of $\bK$ in its dense representation.

We consider the \emph{arithmetic complexity\/} of algorithms, that
is, the number of operations they require in the base field~$\bK$.  For
asymptotic complexity, we employ the classical notation
$\bigO({\cdot})$, $\Omega({\cdot})$, and~$\Theta({\cdot})$, as well as
the notation $g = \bigOsoft(f)$ if there exists~$k$ such that $g/f =
\bigO(\log^k n)$.  We denote by~$\PM(d)$ (resp.\ $\MM(n,d)$) the
arithmetic complexity of the multiplication in $\bK_d[X]$ (resp.\ in
$\Mat_n(\bK_d[X])$).  When the complexity of an algorithm is expressed
in terms of $\PM$ and $\MM$, it means that the algorithm can use
multiplication (in $\bK[X]$ and $\Mat_n(\bK[X])$) as ``black boxes''.
Estimates for $\PM(d)$ and $\MM(n,d)$ are summarised in the following table,
where $\theta$~is a constant between $2$ and~$3$ that depends on the
matrix-multiplication algorithm used. For instance, $\theta=3$ for the schoolbook algorithm, and $\theta=\log_2(7)\approx 2.807$ for Strassen's algorithm~\cite{Strassen69}.
The current tightest upper bound,
due to Vassilevska Williams~\cite{VassilevskaWilliams12}, 
is $\theta < 2.3727$.
\smallskip

\centerline{%
\begin{tabular}{l|l|l|l|l}
Structure & Notation & Naive & Fast & Size\\
\hline
$\bK$ & -- & 1 & 1 & 1\\
$\bK_d[X]$ & $\PM(d)$ & $\bigO(d^2)$ & $\bigOsoft(d)$\hfill\cite{ScSt71} & $\Theta(d)$\\
$\Mat_n(\bK_d[X])$ & $\MM(n,d)$ & $\bigO(n^3 d^2)$ & $\bigOsoft( n^{\theta} d)$ \ \cite{CaKa91} & $\Theta(n^2d)$
\end{tabular}%
}
\smallskip
The complexity of an algorithm is said to be \emph{quasi-optimal} when, assuming~$\theta = 2$, it matches the arithmetic size of its output up to logarithmic factors. For instance, the algorithms in the table above have quasi-optimal complexity.

\medskip \noindent {\bf Generic Matrices.}
The notion of \emph{genericity\/} is useful to analyse algorithms on
inputs that are not ``particular''.  For parameters in some~$\bK^r$, a
property is classically said to be \emph{generic\/} if it holds out of the
zero set of a non-zero $r$-variate polynomial.
To define the notion of \emph{generic matrices}, we identify $M \in
\Mat_n(\bK_d[X])$ with the family $\{m_{i,j,k}\}$, indexed by $1\leq i,j\leq
  n,\ 0\leq k\leq d$, of its coefficients in~$\bK$.

\medskip \noindent {\bf Conventions.}
In this text, $M$~is always the input of the uncoupling algorithms. It is
assumed to be a matrix in $q(X)^{-1} \Mat_n(\bK_d[X])$ with $q(X) \in \bK_d[X]$.
It defines~$\delta$ on a matrix or a row vector~$u$ by $\delta(u) = u M + u'$.

For a matrix~$A$, we respectively denote its $i$th~row and
$j$th~column by $\Row{A}{i}$ and~${\Col{A}{j}}$.  We write
$\VJoin(r^{(1)}, \hdots, r^{(n)})$ for the matrix $A$ such that for
all~$i$, $\Row{A}{i} = r^{(i)}$.  We define the rows~$e_i$ by $\Id_n =
\VJoin(e_1,\dots,e_n)$.

A square matrix~$C$ is said to be \emph{companion\/} if beside
zero coefficients, it has~$1$s on its upper diagonal and
arbitrary coefficients on its last row, $c = (c_0, \dots, c_{n-1})$.  Thus:
\begin{equation}\label{eq:companion-matrix}
C= \VJoin(e_2, \dots, e_{n-1}, c) .
\end{equation}
We say $A$~has its $i$th~row \emph{in companion shape\/} if
$\Row{A}{i} = e_{i+1}$. 

We write $\diag(B^{(1)},\dots,B^{(n)})$ for a diagonal
block matrix given by square blocks~$B^{(i)}$.

\medskip \noindent {\bf Degrees of rational functions and matrices.}
In the present paper, the degree of a rational function is the maximum
of the degrees of its numerator and denominator.  The degree of a
vector or a matrix with rational-function coefficients is the maximum
of the degrees of its coefficients.  The following lemma expresses the
generic degrees encountered when solving a generic matrix.

\begin{lemma} \label{lem:inverse_degree}
Let $A$ be a matrix in $\Mat_n(\bK[X])$. Define $a_i$ as
$\deg(\Row{A}{i})$ and $D$ as $\sum_i a_i$.  Then, $\deg(\det(A))
\leq D$ and, for all i, $\deg(\det(A) {\Col{(A^{-1})}{i}}) \leq D -
a_i$.
When $A$ is generic with $\deg(\Row{A}{i}) = a_i$ for all $i$, those bounds are reached.
\end{lemma}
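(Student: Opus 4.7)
The plan is to establish both upper bounds by direct determinantal formulas, then to verify their tightness in the generic case by extracting the leading coefficient in~$X$.

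\textbf{Upper bounds.} I would apply the Leibniz expansion $\det(A) = \sum_{\sigma} \sgn(\sigma) \prod_i A_{i,\sigma(i)}$: each summand has degree at most $\sum_i \deg(A_{i,\sigma(i)}) \leq \sum_i a_i = D$, hence $\deg(\det(A)) \leq D$. For the inverse column, the adjugate identity gives $\det(A)\,(A^{-1})_{j,i} = (-1)^{i+j} M_{i,j}$, where $M_{i,j}$ is the $(n-1)\times(n-1)$ minor obtained by deleting row~$i$ and column~$j$. Applying the previous Leibniz bound to this minor (whose remaining rows have degrees $a_k$ for $k \neq i$) yields $\deg(M_{i,j}) \leq \sum_{k\neq i} a_k = D - a_i$, uniformly in~$j$, so the entire column has degree at most $D - a_i$.

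\textbf{Tightness.} For the generic statement, I would introduce the leading-coefficient matrix $L \in \Mat_n(\bK)$ with $L_{i,j} := m_{i,j,a_i}$, i.e.\ the coefficient of $X^{a_i}$ in entry~$(i,j)$ of~$A$. By inspection of the Leibniz formula, the coefficient of $X^D$ in $\det(A)$ is exactly $\det(L)$, since only the monomial of maximal degree $a_i$ in each row~$i$ can contribute to the top power of~$X$. Analogously, the coefficient of $X^{D-a_i}$ in $M_{i,j}$ equals the $(i,j)$-minor of~$L$.

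The only subtle step---the one I expect to be the main obstacle---is the rigorous genericity argument: one must show that $\det(L)$ and each $(i,j)$-minor of~$L$ are not identically zero, viewed as polynomials in the ambient parameters $\{m_{i,j,k}\}$. This is however immediate, because the entries of~$L$ form a subfamily of the algebraically independent coordinates $\{m_{i,j,k}\}$ indexing the parameter space. Hence $L$ is itself a generic scalar matrix, whose determinant and all of whose $(n-1)\times(n-1)$ minors are non-zero polynomials in its entries. Consequently, the leading-coefficient polynomials in the $\{m_{i,j,k}\}$ are non-zero, and the claimed degree equalities hold on a Zariski-dense open subset of the parameter space.
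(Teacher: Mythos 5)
Your proof is correct. The upper bounds are established by the Leibniz expansion and the adjugate formula, which is surely what the paper means by ``classical techniques.'' For tightness, however, your route differs from the paper's in spirit. You identify the coefficient of $X^D$ in $\det(A)$ (and of $X^{D-a_i}$ in the relevant minor) as the determinant (respectively minor) of the leading-coefficient matrix $L$, and then argue that these are nonvanishing polynomial functions of the parameters $\{m_{i,j,k}\}$ because the entries of $L$ are a subfamily of algebraically independent coordinates. The paper instead exhibits a single witness matrix $\diag(X^{a_1},\dots,X^{a_n})\,N$, with $N$ a nonsingular constant matrix having nonzero entries, and checks that it reaches the bounds; genericity then follows from non-vanishing of the leading-coefficient polynomial at this one point. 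The two approaches are logically interchangeable: both reduce to showing the same leading-coefficient polynomial is not identically zero, either by explicit identification (yours, slightly more transparent about which coefficient matters) or by evaluation (the paper's, shorter). One small note on your argument for the column bound: to conclude $\deg(\det(A)\,\Col{(A^{-1})}{i}) = D-a_i$ it suffices that \emph{some} $(i,j)$-minor of $L$ is nonzero, not all of them; you prove the stronger fact, which is fine, but the weaker statement already follows from $\det L \ne 0$ since every column of an invertible matrix's adjugate is nonzero.
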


\begin{proof}
Proofs use classical techniques and are omitted.  We simply observe
that $\diag(x^{a_1},\dots,x^{a_n}) \, N$ reaches the announced bounds
when $N\in \Mat_n(\bK\smallsetminus\{0\})$~and~$\det N \neq 0$.
\end{proof}
\end{subsection}

\begin{subsection}{Companion matrices and uncoupling}\label{sec:comp-mat}

For an invertible matrix~$P$, let us perform the change of unknowns $Z
= PY$ in a system~$Y' = M Y$.  Then, $Z' =
P\, Y' + P'Y = (P\, M + P') P^{-1} Z$.
The system is therefore \emph{equivalent\/} to $Z' = P[M]\, Z$ where
$P[M]$ denotes $(P\, M + P') P^{-1}$, in the sense that the solutions
of both systems, whether meromorphic or rational, are in bijection under~$P$.

We call \emph{gauge transformation\/} of a matrix $A$ by an invertible
matrix $P$ the matrix $P[A] = (P\, A + P')P^{-1}$.  When $B = P[A]$,
we say that $A$ and $B$ are \emph{gauge-similar}.  The
gauge-similarity relation is transitive since $P[Q[A]] = (P\, Q)[A]$.
With the notation introduced above, $P[M] = \delta(P)\, P^{-1}$.

The following folklore theorem relates the solutions of a system with
the solutions of the uncoupled equations obtained from a suitable
gauge-similar system: it states that, to uncouple the system $Y' = M
Y$, it suffices to find an invertible matrix~$P$ such that $P[M]$~is
in diagonal companion block form.  This is the main motivation for
uncoupling.  We omit its proof, as it has similarity with the proofs
in \S\ref{sec:struct-thms}, and because we use no consequence of
it later in this article.  (We write $\partial f$ instead of~$f'$ for
derivations.)

\begin{theorem}
Let $P$ be an invertible matrix such that 
\begin{equation}\label{eq:diag-comp-block-decomp}
P[M] = \diag(C^{(1)}, \dots, C^{(t)})
\end{equation}
with $C^{(i)}$ companion of dimension $k_i$. 
Denote the last row of $C^{(i)}$ by
$\bigl(c^{(i)}_{0}, \dots, c^{(i)}_{k_i-1} \bigr)$.
Then, $\partial Y = M\, Y$ if and only if
\begin{equation*}
P Y = \VJoin(Z^{(1)}, \hdots, Z^{(t)})
\end{equation*}
where $Z^{(i)} = \bigl( z^{(i)}, \partial z^{(i)}, \dots,
\partial^{k_i - 1} z^{(i)} \bigr)^T$ and
\begin{equation*}
\partial^{k_i} z^{(i)} = c^{(i)}_{k_i-1} \partial^{k_i-1} z^{(i)} +
\dots + c^{(i)}_{0} z^{(i)} .
\end{equation*}
\end{theorem}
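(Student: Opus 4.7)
The plan is to carry out the substitution $Z = PY$ explicitly and then exploit the block-companion structure of~$P[M]$ to read off the claimed scalar equations. As noted just before the theorem, the computation $\partial(PY) = (\partial P + PM) P^{-1}(PY) = P[M]\,(PY)$ shows that $\partial Y = MY$ is equivalent to $\partial Z = P[M]\,Z$, via the invertible change of unknowns~$P$. So the ``if and only if'' to be proved reduces to showing that $\partial Z = \diag(C^{(1)}, \dots, C^{(t)})\,Z$ is equivalent to the stated block-by-block structural description of~$Z$ together with the scalar equations of orders~$k_i$.

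The block-diagonal form then lets me decompose $Z = \VJoin(Z^{(1)}, \dots, Z^{(t)})$ into $t$ independent subsystems $\partial Z^{(i)} = C^{(i)} Z^{(i)}$, one per companion block. From here, it suffices to treat a single companion block of dimension $k = k_i$; the global statement follows by concatenation over~$i$.

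For one such block, I would write $Z^{(i)} = (z_1, \dots, z_k)^T$. The first $k-1$ rows of~$C^{(i)}$ are $e_2, \dots, e_k$, which directly yield the chain of equations $\partial z_j = z_{j+1}$ for $1 \leq j \leq k-1$. An immediate induction on~$j$ gives $z_j = \partial^{j-1} z_1$, and setting $z^{(i)} := z_1$ places $Z^{(i)}$ in the announced ``vector of successive derivatives'' form. The last row of~$C^{(i)}$ reads $\partial z_k = c^{(i)}_0 z_1 + \dots + c^{(i)}_{k-1} z_k$, which after substitution becomes the claimed scalar equation of order~$k_i$ for~$z^{(i)}$. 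The converse direction is equally direct: starting from a solution $z^{(i)}$ of the scalar equation and \emph{defining} $Z^{(i)}$ as the tuple of successive derivatives, every row of $\partial Z^{(i)} = C^{(i)} Z^{(i)}$ holds by construction; concatenating over~$i$ gives $\partial Z = P[M]\,Z$, and $Y = P^{-1} Z$ then satisfies $\partial Y = MY$ by the same gauge computation in reverse.

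There is no genuine obstacle here: the content is a purely combinatorial unpacking of the companion shape combined with the chain-rule-style identity defining the gauge transformation. The only care needed is to keep track of both directions of the equivalence and to invoke invertibility of~$P$ so that $Z = PY$ is indeed a bijection between the solution spaces of the two systems.
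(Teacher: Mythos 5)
Your proof is correct and complete. The paper itself omits the proof of this theorem, noting only that it is similar in style to the structure lemmas of \S\ref{sec:struct-thms}; your argument is exactly that standard row-by-row unpacking of the companion shape, transposed from the row-vector side (where $\delta(u)=uM+u'$ acts on the rows of $P$, as in Lemma~\ref{th:C0**}) to the solution side (where $\partial$ acts on the column $Y$). The two reductions you use --- the gauge identity giving $\partial Y = MY \Leftrightarrow \partial Z = P[M]\,Z$ for $Z=PY$, and the block decoupling followed by reading off $\partial z_j = z_{j+1}$ from the first $k_i-1$ companion rows and the scalar equation from the last row --- are precisely what is needed, and you correctly invoke the invertibility of $P$ so that $Z=PY$ is a bijection between solution spaces in both directions.
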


\end{subsection}
\end{section}
\begin{section}{Cyclic-vector method}

Two versions of~\CVM are available, depending on how the first row
of~$P$ is obtained: a version \ProbCV picks this first row at random,
and thus potentially fails, but with tiny probability; a deterministic
version \DetCV computes a first row in such a way that the subsequent
process provably cannot fail.  In both cases, \CVM~produces no
non-trivial diagonal companion block decomposition but only one block.

We present the randomised~\CVM only, before analysing the degree of its output
and giving a fast variant.

\begin{subsection}{Structure theorems}\label{sec:struct-thms}

Let $\iterd{k}{u}$ denote the matrix $\VJoin(u, \delta(u), \hdots,
\delta^{k-1}(u))$ of dimension~$k\times n$. The diagonal companion
block decomposition~\eqref{eq:diag-comp-block-decomp} is based on the
following folklore result.

\begin{lemma} \label{th:C0**}
Let $P$ be an invertible matrix.
Then, there exists a companion matrix~$C$ of dimension~$k$
such that
\begin{equation} \label{eq:th:C0**}
P[M] = \left( \begin{smallmatrix} C & 0\\ \ast & \ast \end{smallmatrix} \right)
\end{equation}
if and only if there exists a vector~$u$ such that
$P = \left( \begin{smallmatrix} \iterd{k}{u} \\ \ast \end{smallmatrix} \right)$
and $\delta^{k}(u) \in \Vect\bigl(u, \delta(u), \ldots, \delta^{k-1}(u)\bigr)$.
\end{lemma}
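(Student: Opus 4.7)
The key identity is $\delta(P) = P[M]\,P$, obtained directly from the definition $P[M] = (PM+P')P^{-1} = \delta(P)P^{-1}$. The whole statement comes from reading off the first $k$ rows of this matrix equation, so my plan is simply to extract those rows in both directions.

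\textbf{Forward direction.} Assume $P[M]$ has the block form displayed in~\eqref{eq:th:C0**}. Define $u := \Row{P}{1}$. For $i=1,\dots,k-1$, the $i$-th row of $P[M]$ is $\Row{C}{i}$ padded with zeros, i.e.\ the standard row vector $e_{i+1}$ in dimension~$n$; hence the $i$-th row of $\delta(P) = P[M]\,P$ equals $e_{i+1}\cdot P = \Row{P}{i+1}$. Iterating this from $i=1$ gives $\Row{P}{i} = \delta^{i-1}(u)$ for $i=1,\dots,k$, so the top $k$ rows of $P$ form $\iterd{k}{u}$, as required. Reading off the $k$-th row similarly, with $\Row{C}{k} = (c_0,\dots,c_{k-1})$ padded with zeros, yields $\delta^{k}(u) = \sum_{j=0}^{k-1} c_j\,\delta^j(u)$, which is exactly the claimed membership in $\Vect(u,\delta(u),\dots,\delta^{k-1}(u))$.

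\textbf{Converse direction.} Assume $P = \bigl(\begin{smallmatrix}\iterd{k}{u}\\ \ast\end{smallmatrix}\bigr)$ and write $\delta^{k}(u) = \sum_{j=0}^{k-1} c_j\,\delta^j(u)$. Compute the first $k$ rows of $\delta(P)$ directly: for $i<k$, the $i$-th row is $\delta^{i}(u) = \Row{P}{i+1} = e_{i+1}\cdot P$; the $k$-th row is $\delta^{k}(u) = \sum_{j=0}^{k-1} c_j\,\Row{P}{j+1} = (c_0,\dots,c_{k-1},0,\dots,0)\cdot P$. Multiplying on the right by $P^{-1}$ shows that the first $k$ rows of $P[M]$ are precisely $(C \mid 0)$, with $C$ the companion matrix whose last row is $(c_0,\dots,c_{k-1})$. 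This gives the block form~\eqref{eq:th:C0**}.

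\textbf{Anticipated obstacle.} There is no real mathematical obstacle; the argument is essentially one reading-off of the identity $\delta(P)=P[M]\,P$. The only care needed is notational: distinguishing $e_{i+1}$ viewed inside the $k\times k$ block $C$ from its natural embedding $(e_{i+1},0,\dots,0)$ as a row of length~$n$, and keeping the row indices of $\iterd{k}{u}$ (namely $u,\delta(u),\dots,\delta^{k-1}(u)$) in one-to-one correspondence with $\Row{P}{1},\dots,\Row{P}{k}$.
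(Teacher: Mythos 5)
Your proof is correct and takes essentially the same approach as the paper: both rest on the identity $\delta(P) = P[M]\,P$ and read off the first $k$ rows, with the paper phrasing the row-by-row comparison through the block $U$ of the top $k$ rows whereas you spell it out index by index.
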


\begin{proof}
Set $\left( \begin{smallmatrix} U\\ R\end{smallmatrix} \right) := P$ 
where $U$~has $k$~rows.
Equality~\eqref{eq:th:C0**} is equivalent to
$
\delta \left( \begin{smallmatrix} U\\ R\end{smallmatrix} \right) =
\left( \begin{smallmatrix} C & 0\\ \ast & \ast \end{smallmatrix} \right)
\left( \begin{smallmatrix} U\\ R\end{smallmatrix} \right)
$,
then with
$\delta(U) = C\, U$.
This can be rewritten:
\[ 
\VJoin(\delta(\Row{U}{1}), \hdots, \delta(\Row{U}{k}))
=
\VJoin(\Row{U}{2}, \hdots, \Row{U}{k}, \Row{C}{k} U) .
\]
Set $u$ to the first row of~$U$.
This equation is satisfied if and only if
$U = \iterd{k}{u}$
and $\delta^{k}(u) \in \Vect(\iterd{k}{u})$.
\end{proof}

The following corollaries for partial companion decomposition and
diagonal companion block decomposition are proved in a very similar
fashion to the preceding lemma.  They will be used for the analysis
of \CVM and~\DBZ.

\begin{corollary} \label{th:partial_companion_decomposition}
Let $P$ be an invertible matrix.  Then,
$P[M]$ has its first $k-1$ rows in companion shape
if and only if there exists
a row vector~$u$ such that
$P = \left(\begin{smallmatrix} \iterd{k}{u}\\ \ast \end{smallmatrix}\right)$.
\end{corollary}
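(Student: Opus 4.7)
The plan is to follow the template of the proof of Lemma~\ref{th:C0**}, replacing the block-matrix identity by a row-by-row reading of the defining equation $\delta(P) = P[M]\cdot P$.

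First I would partition $P$ as $\left(\begin{smallmatrix} U\\ R \end{smallmatrix}\right)$, where $U$ denotes the top $k\times n$ block, and set $u := \Row{P}{1}$. For the forward direction, the hypothesis $\Row{P[M]}{i} = e_{i+1}$ for $i = 1,\ldots,k-1$, upon multiplication on the right by $P$, collapses to the recurrence $\delta(\Row{P}{i}) = \Row{P}{i+1}$ on the same range of indices. Iterating this recurrence identifies $\Row{U}{i+1}$ with $\delta^{i}(u)$ for $i = 0, \ldots, k-1$, yielding $U = \iterd{k}{u}$.

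For the converse, the assumption that $U = \iterd{k}{u}$ encodes exactly the identities $\delta(\Row{P}{i}) = \Row{P}{i+1} = e_{i+1}\cdot P$ for $i=1,\ldots,k-1$. Reading off the first $k-1$ rows of $\delta(P) = P[M]\,P$ gives $\Row{P[M]}{i}\cdot P = e_{i+1}\cdot P$; invertibility of $P$ then lets one cancel $P$ on the right and conclude $\Row{P[M]}{i} = e_{i+1}$.

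There is no substantial obstacle here: the work is essentially the same as in Lemma~\ref{th:C0**}, with only the right-hand block now ``arbitrary'' rather than required to vanish. The only point requiring care is the index bookkeeping — a hypothesis bearing on the first $k-1$ rows of $P[M]$ pins down the first $k$ (rather than $k-1$) rows of~$P$, since each companion row $e_{i+1}$ simultaneously constrains $\Row{P}{i}$ and $\Row{P}{i+1}$ and so the chain $u, \delta(u), \ldots, \delta^{k-1}(u)$ extends one step past the range of the hypothesis.
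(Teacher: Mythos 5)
Your proof is correct and follows exactly the route the paper intends: it mimics the proof of Lemma~\ref{th:C0**} by reading the identity $\delta(P) = P[M]\,P$ row by row, with the only change being that the last $n-k$ rows of $P[M]$ are left unconstrained. The paper omits the corollary's proof, asserting only that it goes ``in a very similar fashion'' to the lemma's; your write-up supplies that argument and, helpfully, flags the one-off index bookkeeping explicitly.
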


\begin{corollary}\label{thm:P[M]-is-diag-of-comp}
Let $P$ be an invertible matrix and
$\{C^{(i)}\}_{1 \leq i \leq t}$ a family of
companion matrices of dimension $k_i$.  Then,
\begin{equation*}
P[M] = \diag(C^{(1)}, \hdots, C^{(t)})
\end{equation*}
if and only if
there exist $t$ row vectors $\{u^{(i)}\}_{1 \leq i \leq t}$ such that
$P = \VJoin( \iterd{k_1}{u^{(1)}}, \hdots, \iterd{k_t}{u^{(t)}})$ 
and for all~$i$,
$\delta^{k_i}(u^{(i)})$ is in $\Vect(\iterd{k_i}{u^{(i)}})$.
\end{corollary}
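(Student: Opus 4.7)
The plan is to partition $P$ into $t$ horizontal strips $U^{(1)},\dots,U^{(t)}$, where $U^{(i)}$ consists of $k_i$ consecutive rows of $P$, and argue block by block, in a direct generalization of the proof of Lemma~\ref{th:C0**}. The starting point is the identity $P[M] = \delta(P)\,P^{-1}$ recalled in \S\ref{sec:comp-mat}, which lets us rewrite the target equality $P[M] = \diag(C^{(1)},\dots,C^{(t)})$ as
\[
\delta(P) = \diag(C^{(1)},\dots,C^{(t)})\,P .
\]
Because the left factor on the right-hand side is block-diagonal, the matrix equation decouples into $t$ independent row-strip equations $\delta(U^{(i)}) = C^{(i)}\,U^{(i)}$, one per block; the clean decoupling is the only structural point that goes beyond Lemma~\ref{th:C0**}.

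For the ``only if'' direction, I would now analyze each block exactly as in Lemma~\ref{th:C0**}. Expanding $\delta(U^{(i)}) = C^{(i)}\,U^{(i)}$ row by row and using that $C^{(i)}$ is companion gives
\[
\VJoin(\delta(\Row{U^{(i)}}{1}),\dots,\delta(\Row{U^{(i)}}{k_i}))
= \VJoin(\Row{U^{(i)}}{2},\dots,\Row{U^{(i)}}{k_i},\Row{C^{(i)}}{k_i}\,U^{(i)}) .
\]
Setting $u^{(i)} := \Row{U^{(i)}}{1}$, the first $k_i-1$ of these row equalities force $U^{(i)} = \iterd{k_i}{u^{(i)}}$ by an immediate induction, while the last one forces $\delta^{k_i}(u^{(i)}) = \Row{C^{(i)}}{k_i}\,U^{(i)} \in \Vect(\iterd{k_i}{u^{(i)}})$. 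Concatenating the strips yields the claimed form of $P$.

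For the ``if'' direction, I would simply reverse the construction: given rows $u^{(i)}$ with $\delta^{k_i}(u^{(i)}) \in \Vect(\iterd{k_i}{u^{(i)}})$, define $U^{(i)} := \iterd{k_i}{u^{(i)}}$ and let $C^{(i)}$ be the companion matrix whose last row encodes the expression of $\delta^{k_i}(u^{(i)})$ in the basis $u^{(i)}, \delta(u^{(i)}), \dots, \delta^{k_i-1}(u^{(i)})$ (these are independent since $P$ is assumed invertible). Reading the row-by-row identity above backwards gives $\delta(U^{(i)}) = C^{(i)}\,U^{(i)}$ for each $i$, and stacking the blocks yields $\delta(P) = \diag(C^{(1)},\dots,C^{(t)})\,P$, i.e. the desired equality for $P[M]$.

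The only delicate point is really the block decoupling argument, and even that is immediate from the shape of $\diag(C^{(1)},\dots,C^{(t)})$. Once it is in place, the proof reduces to $t$ parallel copies of the reasoning used for Lemma~\ref{th:C0**} (or, equivalently, $t$ parallel applications of Corollary~\ref{th:partial_companion_decomposition} augmented by the last-row closure condition), so no genuinely new technique is required.
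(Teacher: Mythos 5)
Your argument is correct and is exactly the intended proof: the paper omits it, stating only that the corollary is ``proved in a very similar fashion to the preceding lemma,'' and your block-decoupling of $\delta(P) = \diag(C^{(1)},\dots,C^{(t)})\,P$ into $t$ independent strip equations $\delta(U^{(i)}) = C^{(i)}U^{(i)}$, each then handled by the row-by-row reasoning of Lemma~\ref{th:C0**}, is precisely that ``very similar fashion.'' The one remark worth keeping in mind is that the corollary's phrasing (fixing the $C^{(i)}$ up front) should be read with the $C^{(i)}$ existentially quantified, as in Lemma~\ref{th:C0**}; your ``if'' direction handles this correctly by constructing the $C^{(i)}$ from the expansion of $\delta^{k_i}(u^{(i)})$ in the rows of $\iterd{k_i}{u^{(i)}}$, using the invertibility of $P$ to ensure that expansion is well defined.
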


\end{subsection}

\begin{subsection}{Classical algorithms for cyclic vectors}

The name \CVM comes from the following notion.

\begin{definition}
A \emph{cyclic vector\/} is a row vector~$u \in \bK(X)^n$ for which the matrix~$\iterd{n}{u}$ is invertible, or, equivalently, such
that the cyclic module generated by~$u$
over\/~$\bK(X)\langle\delta\rangle$ is the full vector space
$\Mat_{1,n}\bigl(\bK(X)\bigr)$ of row vectors.
\end{definition}

The next folklore method~\cite{Barkatou93,ChKo02} is justified by
Theorem~\ref{thm:CVM-generically-ok} below, which means that \CVTrial
will not fail too often.

\begin{algorithm}\small
\caption{\CVTrial: Testing if a vector is a cyclic vector}
\label{algo_rand_Cv}
\textbf{Input:} $M \in q(X)^{-1} \Mat_n(\bK_d[X])$ and $u \in \Mat_{1,n}(\bK_{n-1}[X])$ \\
\textbf{Output:} $P$, $C$ with $C$ companion and $\delta(P)P^{-1} = C$

\begin{algorithmic}[1]
\STATE set $P$ to the square zero matrix of dimension $n$
\STATE $\Row{P}{1} := u$
\STATE for $i=1$ to $n-1$, do $\Row{P}{i+1} := \delta(\Row{P}{i})$ \label{CVTrial-loop}
\STATE if $P$ is not invertible, return “Not a cyclic vector”
\STATE $C := \delta(P)P^{-1}$ \label{CVTrial-companion}
\STATE return $(P, C)$
\end{algorithmic} 
\end{algorithm}

\begin{theorem}\cite{Cope1936,ChKo02}\label{thm:CVM-generically-ok}
When $u$ is generic of degree less than~$n$, the matrix~$P =
\iterd{n}{u}$ is invertible.
\end{theorem}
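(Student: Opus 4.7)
The plan is to regard $\det(\iterd{n}{u})$ as a function of the $n^2$ coefficients of $u$, show that this function is not identically zero, and conclude by the definition of genericity recalled in the preliminaries.

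Writing $u = (u_1, \ldots, u_n)$ with $u_j = \sum_{k=0}^{n-1} a_{j,k} X^k$, I would treat the $a_{j,k}$ as $n^2$ formal parameters. Since $\delta(v) = v M + v'$ is $\bK$-linear in $v$, each iterated row $\delta^i(u)$ of $\iterd{n}{u}$ depends $\bK$-linearly on the $a_{j,k}$; its entries live in $q(X)^{-i}\bK[X]$, because each application of $\delta$ multiplies the denominator by at most one extra factor of $q(X)$. Multiplying $\iterd{n}{u}$ by $q(X)^{n-1}$ clears all denominators, so the entries become polynomials in $X$ with coefficients in $\bK[a_{j,k}]$, and consequently $q(X)^{n(n-1)} \det(\iterd{n}{u})$ is a polynomial in $X$ and in the $a_{j,k}$. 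To prove the theorem, it then suffices to show that this polynomial is not identically zero, for its non-vanishing locus will be a Zariski-dense open subset of the parameter space $\bK^{n^2}$.

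To establish non-vanishing, I would exhibit a single specialisation of the $a_{j,k}$ for which $\iterd{n}{u}$ is invertible. Such a specialisation is provided by Cope's existence result~\cite[\S6]{Cope1936}, extended to arbitrary characteristic in~\cite[\S3]{ChKo02}: the system $Y' = M Y$ admits a polynomial cyclic vector of degree strictly less than $n$. Plugging this vector in gives the required non-zero value, so the determinant polynomial does not vanish identically, and the set of non-cyclic $u \in \Mat_{1,n}(\bK_{n-1}[X])$ forms a proper Zariski-closed subset of $\bK^{n^2}$, which is exactly the genericity statement of the theorem. The only genuinely hard step is the existence of a polynomial cyclic vector of degree less than $n$; this is invoked as a black box from the cited works, after which the genericity conclusion is immediate.
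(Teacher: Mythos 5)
Your proposal is correct and follows essentially the same route as the paper's own proof: invoke the Cope/Churchill--Kovacic existence of a polynomial cyclic vector of degree less than $n$ as a black box, then observe that $\det(\iterd{n}{u})$, after clearing denominators, is a polynomial function of the coefficients of $u$ which is therefore non-zero, and conclude by genericity. Your version merely spells out the denominator-clearing step in more detail than the paper's terse two-sentence argument.
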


\begin{proof}
It is proved in~\cite{Cope1936,ChKo02} that every
matrix~$M$ admits a cyclic vector~$u$ of degree less than~$n$.  Then,
$\det({\cdot})$~is a non-zero polynomial function of the matrix
coefficients.
\end{proof}

In \ProbCV, $u$~is chosen randomly, leading to a Las Vegas
algorithm for finding a cyclic vector.  The proof above refers to the
theorem that every matrix~$M$ admits a cyclic vector.  Churchill and
Kovacic give in \cite{ChKo02} a good survey of this subject.  They
also provide an algorithm that we denote \DetCV that takes as input a 
square matrix
$M$ and deterministically outputs a cyclic vector~$u$.  The
arithmetic complexity of this algorithm is polynomial, but worse than
that of \ProbCV.
\end{subsection}

\begin{subsection}{Degree analysis and fast algorithm}

\CVTrial computes two matrices, $P$ and~$C$, whose sizes we now
analyse.  We shall find the common bound~$\bigO(n^3 d)$.  When $u \in
\Mat_{1,n}(\bK_{n-1}[X])$ is generic, this bound is reached.  The size
$\Theta(n^3 d)$ of the output of \CVTrial is then a lower bound on the
complexity of any algorithm specifying~\CVM.  After the remark that
the complexity of the simple algorithm is above this bound, we give a
fast algorithm.

We start by bounding the degree of the matrix~$\iterd{n}{u}$.
Following the result of Churchill and Kovacic,
we make the assumption that $\deg(u)$ is less than~$n$.

\begin{lemma} \label{th:deg_delta_k}
The row vector $q^k \delta^{k}(u)$ consists of polynomials of
degree at most $\deg(u) + kd$.
\end{lemma}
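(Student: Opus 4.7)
The plan is a straightforward induction on $k$, using the defining relation $\delta(v) = vM + v'$ together with the hypothesis $M = \tilde M/q$ with $\tilde M \in \Mat_n(\bK_d[X])$ and $q \in \bK_d[X]$.

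First, I set $v_k := q^k \delta^k(u)$ and aim to show by induction that $v_k$ is a polynomial row vector with $\deg(v_k) \le \deg(u) + kd$. The base case $k = 0$ is immediate since $v_0 = u$.

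For the inductive step, I compute $\delta^{k+1}(u) = \delta(v_k / q^k) = (v_k/q^k) M + (v_k/q^k)'$, expand the derivative, and multiply through by $q^{k+1}$ to obtain
\begin{equation*}
v_{k+1} = v_k \, (qM) + q\, v_k' - k\, q'\, v_k .
\end{equation*}
Since $qM = \tilde M \in \Mat_n(\bK_d[X])$, the first term is polynomial of degree at most $\deg(v_k) + d$. The second and third terms are polynomial of degree at most $\deg(v_k) + d - 1$ (using $\deg(v_k') \le \deg(v_k) - 1$ and $\deg(q'),\deg(q) \le d$). Adding the three contributions gives that $v_{k+1}$ is polynomial of degree at most $\deg(v_k) + d \le \deg(u) + (k+1)d$, closing the induction.

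There is no real obstacle here; the only point to be careful about is that the $1/q^k$ factors really do cancel, which is why I phrase the induction on the polynomial vector $v_k = q^k \delta^k(u)$ rather than directly on $\delta^k(u)$. The derivative of $1/q^k$ produces a factor $-k q'/q^{k+1}$, but multiplying by $q^{k+1}$ reabsorbs the denominator cleanly, and the degree drop coming from differentiation ensures that the dominant contribution is the purely algebraic term $v_k(qM)$, giving the linear growth $kd$ in~$k$.
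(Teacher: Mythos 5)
Your proof is correct and follows essentially the same route as the paper: the paper states the operator identity $q^{k+1}\delta^{k+1} = (q\delta - kq')\,q^k\delta^k$ and leaves the induction to the reader, while you unwind the same identity as the explicit recurrence $v_{k+1} = v_k(qM) + q\,v_k' - kq'\,v_k$ before concluding. The degree bookkeeping is identical in substance.
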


\begin{proof}
The proof proceeds by induction after noting that
$q^{k+1} \delta^{k+1} = q (\delta q^k - kq^{k-1}q') \delta^k =
  (q\delta - kq') q^k \delta^k$.
\end{proof}

We list further bounds on degrees and arithmetic sizes, some of which are already in~\cite[\S2]{Cluzeau03}:

\smallskip
\begin{tabular}{l|ll}
{} & Degree & Arithmetic size\\ \hline
$P$               & $\deg(u) + (n-1) d$                      & $\bigO(n^3 d + n^2 \deg(u))$\\
$P^{-1}$          & $n\deg(u) + \frac{n(n-1)}{2} d$          & $\bigO(n^4 d + n^3 \deg(u))$\\
$\delta^{n}(u)$ & $\deg(u) + n d$                          & $\bigO(n^2 d + n \deg(u))$\\
$C$               & $n \deg(u) + \frac{n(n+1)}{2} d$         & $\bigO(n^3 d + n^2 \deg(u))$
\end{tabular}

\begin{theorem} \label{th:CVTcomplexity}
Algorithm~\CVTrial\ implements \CVM in quasi-optimal complexity
$\bigOsoft(n^{\theta + 1} d)$.
\end{theorem}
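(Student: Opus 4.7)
The plan is to walk through each line of \CVTrial\ and confirm an implementation in $\bigOsoft(n^{\theta+1} d)$, then match this against the output size to conclude quasi-optimality. From the degree table preceding the theorem, with $\deg(u) < n$, the matrix $P = \iterd{n}{u}$ has arithmetic size $\Theta(n^3 d)$, and although $C$ is companion, its informative last row has entries of degree $\bigO(n^2 d)$, for a total of $\Theta(n^3 d)$. Writing out the output therefore requires $\Omega(n^3 d)$ operations, so the target $\bigOsoft(n^{\theta+1}d)$ coincides, up to polylogarithmic factors, with this lower bound when $\theta = 2$.

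For the construction of $P$ at lines~1--3, the naive loop $\Row{P}{i+1} := \delta(\Row{P}{i})$ computes a row-by-matrix product in degree $\bigO(id)$, costing $\bigOsoft(n^2 \cdot id)$ per step and summing to the pessimistic $\bigOsoft(n^4 d)$. To meet the announced complexity, I plan to replace this loop by the balanced subroutine \constbal, which assembles consecutive iterates in blocks handled by fast polynomial matrix multiplication ($\MM(n, nd) = \bigOsoft(n^{\theta+1} d)$); the block sizes are chosen so that the total cost is dominated by a constant number of such products.

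For lines~4--5, the matrix $\delta(P)$ is essentially free since its first $n-1$ rows coincide with $\Row{P}{2}, \dots, \Row{P}{n}$; only the last row $\delta^n(u)$ is new, obtainable by one additional application of $\delta$ at cost $\bigOsoft(n^2 \cdot nd) = \bigOsoft(n^3 d)$. Because $C$ is companion, only its last row $c = (c_0, \dots, c_{n-1})$ is nontrivial, and it satisfies the polynomial linear system $c\,P = \delta^n(u)$ whose coefficient matrix has dimension $n$ and degree $\bigO(nd)$. Applying \resolstorj, which is Storjohann's high-order lifting, solves this system in $\bigOsoft(n^\theta \cdot nd) = \bigOsoft(n^{\theta+1} d)$; non-invertibility of $P$ at line~4 is detected as a byproduct of the same linear solve.

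Summing the three contributions yields the announced $\bigOsoft(n^{\theta+1} d)$ complexity, which matches the output size $\Theta(n^3 d)$ up to polylogarithmic factors when $\theta = 2$, establishing quasi-optimality. The principal obstacle is the design and analysis of \constbal: the iteration implicitly runs $u_{i+1} = u_i N_i + q u_i'$ with $N_i = qM - iq'\,\Id$, mixing right-multiplication by a degree-$d$ matrix and differentiation, while the successive iterates have degrees growing linearly in $i$; it is not obvious how to batch these iterates so that fast polynomial matrix multiplication brings the cost down from $\bigOsoft(n^4 d)$ to $\bigOsoft(n^{\theta+1} d)$, and this is where the new design effort concentrates.
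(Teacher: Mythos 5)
Your overall plan is the right one, and two of its three components match the paper exactly: you correctly identify the Storjohann solve for $c\,P = \delta^n(u)$ as the route to the last row of $C$ in $\bigOsoft(\MM(n,nd)) = \bigOsoft(n^{\theta+1}d)$, and your output-size accounting ($\Theta(n^3 d)$, hence quasi-optimality at $\theta=2$) is the same as the paper's. However, there is a genuine gap in the part you yourself flag as unresolved: how to bring the loop at Step~3 down from $\bigOsoft(n^4 d)$.

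You propose to ``assemble consecutive iterates in blocks'' and handle them by a constant number of $\MM(n,nd)$ products, and then correctly worry that the recursion $u_{i+1} = u_i N_i + q u_i'$ mixes right-multiplication and differentiation, making it unclear how to batch several iterates together. That worry is well founded --- but it is aimed at the wrong construction. The paper does not batch across iterations at all; it keeps the loop strictly sequential and instead \emph{balances within a single step}. Concretely, at step~$i$ the row $\Row{P}{i}$ has degree $\bigO(nd)$ while $M$ has degree $d$, so the vector--matrix product is unbalanced. Writing $\Row{P}{i} = \sum_{k=0}^{n-1} A_k X^{kd}$ with each $A_k$ a length-$n$ row of degree-$d$ polynomials, and stacking $A = \VJoin(A_0,\dots,A_{n-1})$, turns the single high-degree vector--matrix product into one $n\times n$ by $n\times n$ product of degree-$d$ polynomial matrices: $AM$ costs $\bigO(\MM(n,d))$, and $\Row{P}{i}M = \sum_k \Row{(AM)}{k+1}\,X^{kd}$ is reassembled in linear time. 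Each of the $n$ iterations therefore costs $\bigO(\MM(n,d)) = \bigOsoft(n^\theta d)$, and the loop totals $\bigO(n\,\MM(n,d)) = \bigOsoft(n^{\theta+1}d)$. No commutation of $\delta$ with multiplication is ever needed, which is precisely why the obstacle you identify does not arise. Without this per-step reshaping idea, your proof does not close; supplying it completes the argument and recovers the paper's bound.
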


\begin{proof}
At Step~\ref{CVTrial-loop}, \CVTrial computes $\delta(\Row{P}{i}) =
\Row{P}{i} M + \Row{P}{i}'$ for successive~$i$'s.  Addition and
derivation have softly linear complexity, so we focus on the product
$\Row{P}{i} M$.  Computing it by a vector-matrix product would have
complexity $\Omega(n^2 i d)$ and the complexity of the loop at
Step~\ref{CVTrial-loop} would then be~$\Omega(n^4 d)$.  The row
$\Row{P}{i}$ has higher degree than~$M$, so the classical idea to make
it a matrix to balance the product applies.  Let $A_k \in \bK_d[X]^n$
be the rows defined by $\Row{P}{i} = \sum_{k=0}^{n-1} A_k X^{k d}$,
and $A := \VJoin(A_0, \dots, A_{n-1})$.  The product $AM$ is computed
in complexity $\bigO(\MM(n,d))$, and $\Row{P}{i} M$ is reconstructed
in linear complexity, thus performing the whole loop in complexity
$\bigO(n\, \MM(n,d))$.

Only the last row $\delta^{n}(u) P^{-1}$ of~$C$ needs to be computed
at Step~\ref{CVTrial-companion}, and the size $\Theta(n^4 d)$ of
$P^{-1}$ bans the computation of~$P^{-1}$ from any low complexity
algorithm.
$C$~can be computed in complexity
$\bigOsoft(\MM(n,nd))$.
This is achieved by solving $PY = \delta^n(u)$
by Storjohann's algorithm \cite{Storjohann03}, which
inputs $\delta^{n}(u)$ and $P$, of degree $\Theta(nd)$,
and outputs the last row $\delta^{n}(u)\, P^{-1}$ of~$C$ in
$\bigO(\MM(n,nd) \log(nd))$ operations.
\end{proof}

\end{subsection}

\end{section}

\begin{section}{The Danilevski-Barkatou-\\Z\"urcher algorithm}

We begin this section with a description of algorithm \DBZ, before a naive
analysis that gives exponential bounds.  Experiments show a
polynomial practical complexity, whence the need for a finer analysis.
To obtain it, we develop an algebraic interpretation of the algorithm.

\begin{subsection}{Description of the algorithm}\label{sec:DBZ_algo}

The input to~\DBZ is a matrix~$M \in \Mat_n(\bK(X))$;
its output $(P, C^{(1)}, \hdots, C^{(t)})$
satisfies $P[M] = \diag(C^{(1)}, \dots, C^{(t)})$ for companion
matrices~$C^{(i)}$.  \DBZ~iterates over~$P[M]$ to make it progressively
diagonal block-companion.  To do so, three sub-algorithms
are used, \DBZI, \DBZII, and \DBZIII, in order to achieve special
intermediate forms for~$P[M]$.  These forms, respectively Shape~(I),
(II), and~(III), are:
\begin{scriptsize}
\begin{equation*}
\begin{pmatrix} C & 0\\ \alpha & \beta \end{pmatrix} ,
\quad
\begin{pmatrix}
C & \begin{smallmatrix}
0 & \cdots & 0\\
\vdots & & \vdots\\
0 & \cdots & 0
\end{smallmatrix}\\
\begin{smallmatrix} v &
\begin{smallmatrix}
0 & \cdots & 0 \\ \vdots & & \vdots \\0 & \cdots & 0 
\end{smallmatrix}
\end{smallmatrix}
& \beta
\end{pmatrix} ,
\quad
\begin{pmatrix}
\ast & 
\begin{smallmatrix} 1 & 0 & \cdots & 0 
\end{smallmatrix} & 
\begin{smallmatrix} \ast & \cdots & \ast 
\end{smallmatrix} \\
\begin{smallmatrix} 0 \\ \vdots \\ 0 
\end{smallmatrix} & C & 
\begin{smallmatrix} 0 & \cdots & 0\\ \vdots & & \vdots\\ 0 & \cdots & 0 
\end{smallmatrix}\\
\begin{smallmatrix} \ast \\ \vdots \\ \ast 
\end{smallmatrix} & 
\begin{smallmatrix} 0 & \cdots & \cdots & 0\\ \vdots & & & \vdots\\ 0 & \cdots & \cdots & 0 
\end{smallmatrix} & 
\begin{smallmatrix} \ast & \cdots & \ast\\ \vdots & & \vdots\\ \ast & \cdots & \ast 
\end{smallmatrix}
\end{pmatrix} ,
\end{equation*}
\end{scriptsize}
\!\!where in each case $C$~denotes a companion matrix, $v$~is a column
vector, $\alpha,\beta$~are general matrices, and $\beta$~is square.

In the course of~\DBZ, first, \DBZI computes $\PI$ and $\MI$ such that
$\PI[M] = \MI$ has Shape~(I).  If $\MI = C$ is companion --- that is,
if $\alpha$ and~$\beta$ do not occur --- then \DBZ returns $(\PI,
\MI)$.  If not, at this point, \DBZ~has obtained a first companion
block~$C$ and we would hope that $\alpha$~is zero to apply~\DBZ
recursively to~$\beta$.  So, in general, \DBZ~tries to cancel~$\alpha$,
by appealing to~\DBZII to compute $\PII$ and~$\MII$ such that
$\PII[\MI] = \MII$ has Shape~(II).  If the obtained~$v$ is zero, then
\DBZ~can go recursively to~$\beta$.

If not, \DBZ~seems to have failed and restarts
on a matrix $\PIII[\MII] = \MIII$ with Shape~(III), which
ensures \DBZI~can treat at least one more row than previously (as
proved in~\cite{BrPe96}).  Therefore, the algorithm does not loop
forever.  The matrix~$\MIII$ on which \DBZ~starts over and the
differential change of basis~$\PIII$ associated are computed
by~\DBZIII.

\begin{algorithm}\small
\caption{\DBZ (Danilevski-Barkatou-Z\"urcher)}
\textbf{Input:} $M \in \Mat_n(\bK(X))$ \\
\textbf{Output:} $(P, C^{(1)}, \hdots, C^{(t)})$ with $C^{(i)}$ companion matrices and $P[M] = \diag(C^{(1)}, \hdots, C^{(t)})$

\begin{algorithmic}[1]
\STATE $(\PI, \MI) := \text{\DBZI}(M)$
\STATE if $\MI$ is companion then return$(\PI, \MI)$ \label{st:DBZ-fingenerique}
\STATE $(\PII, \MII) := \text{\DBZII}(\MI)$
\STATE $\left( \begin{smallmatrix} C & 0\\ \begin{smallmatrix} v & 0 \end{smallmatrix} & \beta \end{smallmatrix}\right) := \MII$ where $v \in \Mat_{n-k,1}(\bK(X))$
\STATE if $v = 0$ then
\STATE \dec  $(P, C^{(1)}, \hdots, C^{(t)}) := \text{\DBZ}(\beta)$
\STATE \dec return $(\diag(\Id_k, P) \PII \PI, C, C^{(1)}, \hdots, C^{(t)})$ \label{finbloc}
\STATE $(\PIII, \MIII) := \text{\DBZIII}(\MII)$
\STATE $(P, C^{(1)}, \hdots, C^{(t)}) := \text{\DBZ}(\MIII)$ \label{recommencer}
\STATE return $(P \PIII \PII \PI, C^{(1)}, \hdots, C^{(t)})$ \label{finrecommencer}
\end{algorithmic} 
\end{algorithm}

\end{subsection}
\begin{subsection}{Description of the sub-algorithms}\label{sec:DBZ_subalgos}

We now describe \DBZI, \DBZII, and \DBZIII in more details.  By
$\Elem_{i,j}(t)$, we denote the matrix obtained after replacing by~$t$
the $(i,j)$~coefficient in the identity matrix~$\Id_n$, and by $\Elem_{i}(u)$ the
matrix obtained after replacing the $i$th~row by the row vector~$u$ in
$\Id_n$.
Let $\Inv^{(j,n)}$ denote the matrix obtained from $\Id_n$ after
exchanging the $j$th and~$n$th rows.  Set $\Rot=\VJoin(e_n, e_1,
\hdots, e_{n-1})$.

The algorithms developed below rely on a common \Pivot subtask, which
inputs $(M, P, T) \in \Mat_n(\bK(X))^3$ with invertible $P$ and~$T$,
and outputs the update of $(M, P)$ under~$T$.  This really behaves
like a Gauge transformation, changing~$(M, P)$ to $\bigl(T[M],
TP\bigr)$.  This modification of $M$ and~$P$ only ensures the
invariant~$M = P[M_{\text{initial}}]$.

\DBZI~inputs~$M$ and outputs the tuple of matrices $(\PI, \MI)$ with
$\MI$~in Shape~(I).  It starts with $\MI = M$ and modifies its rows
one by one.  At the $i$th~iteration of the loop (Step~\ref{st:DBZI-forligne}),
the matrix~$\MI$ has its first $i-1$~rows in companion form.
To put the $i$th~row in companion form, \DBZI~sets $\MI_{i+1,i+2}$ to~1
(Step~\ref{st:DBZI-piv2}), and uses it as a pivot to cancel the other
coefficients of the row (loop at Step~\ref{st:DBZI-forcolonne}).

\begin{algorithm}\small
\caption{\DBZI}
\textbf{Input:} $M \in \Mat_n(\bK(X))$ \\
\textbf{Output:} $(\PI, \MI)$ with $\MI$ in Shape~(I) and $\PI[M] = \MI$

\begin{algorithmic}[1]
\STATE $(\PI, \MI) := (\Id, M)$
\STATE for $i=1$ to $n-1$ do \label{st:DBZI-forligne}
\STATE \dec $r := \min \bigl( \{j\ |\ \MI_{i, j} \neq 0 \ \text{and}\ j > i\} \cup \{n+1\} \bigr)$ \label{DBZI_r}
\STATE \dec if $r = n+1$ then return $(\PI,\MI)$ \label{st:DBZI-touscoeffsnuls}
\STATE \dec $(\MI, \PI) := \Pivot(\MI, \PI, \Inv^{(i+1, r)})$ \label{st:DBZI-piv1}
\STATE \dec $(\MI, \PI) := \Pivot(\MI, \PI, \Elem_{i+1,i+1}(\MI_{i,i+1})^{-1})$ \label{st:DBZI-piv2}
\STATE \dec for $j=1$ to $n$ with $j \neq i+1$ do \label{st:DBZI-forcolonne}
\STATE \dec \dec $(\MI, \PI) := \Pivot(\MI, \PI, \Elem_{i+1,j}(-\MI_{i,j}))$ \label{st:DBZI-piv3}
\STATE return $(\PI,\MI)$ \label{st:DBZI-last-return}
\end{algorithmic} 
\end{algorithm}

If $\MI_{i+1,i+2} = 0$ and 
there is a non-zero coefficient farther on the row, 
then the corresponding columns are inverted at Step~\ref{st:DBZI-piv1}
and \DBZI~goes on. 
If there is no such coefficient, the matrix has reached Shape~(I) and is returned at Step~\ref{st:DBZI-touscoeffsnuls}.

\DBZII inputs $\MI$ and outputs a tuple $(\PII, \MII)$
with $\MII$~in Shape~(II).
At Step~\ref{st:DBZII-nettoyeralpha},
it cancels the columns of the lower-left block of $\MI$ one by one,
from the last one to the second one,
using the~1's of~$C$ as pivots.
At the $\ell$th~iteration,
the lower-left block of~$\MI$ ends with $\ell$~zero columns.

\begin{algorithm}\small
\caption{\DBZII}
\textbf{Input:} $\MI$ in Shape~(I) \\
\textbf{Output:} $(\PII, \MII)$ in Shape~(II) such that $\PII[\MI] = \MII$

\begin{algorithmic}[1]
\STATE $k := $ size of the companion block of $\MI$
\STATE $(\PII, \MII) := (\Id,\MI)$ \label{st:DBZII-P-is-Id}
\STATE for $j=k$ down to 2 do \label{st:DBZII-nettoyeralpha}
\STATE \dec for $i=k+1$ to $n$ do \label{st:DBZII-loop-over-i}
\STATE \dec \dec $(\MII, \PII) := \Pivot(\MII, \PII, \Elem_{i,j-1}(- \MII_{i,j}))$ \label{st:DBZII-piv}
\STATE return $(\PII,\MII)$
\end{algorithmic} 
\end{algorithm}

\DBZIII inputs $\MII$ with $v$ non-zero and outputs the tuple $(\PIII,
\MIII)$ with $\MIII$~in Shape~(III).  The transformation of~$M$ at
Step~\ref{st:DBZIII-piv1} reverses~$v$ to put a non-zero coefficient on
the last row of~$\MII$.  Then it sets it to~1 at
Step~\ref{st:DBZIII-piv2} and uses it as a pivot to cancel the
other~$v_j$'s (Step~\ref{st:DBZIII-piv3}).  Finally, at
Step~\ref{st:DBZIII-piv4}, a cyclic permutation is applied to the rows and
columns: last row becomes first, last column becomes first.

\begin{algorithm}\small
\caption{\DBZIII}
\textbf{Input:} $\MII$ in Shape~(II), under the constraints $C$~has dimension $k$ and $v \neq 0$ \\
\textbf{Output:} $(\PIII, \MIII)$ in Shape~(III) where $\PIII[\MII] = \MIII$

\begin{algorithmic}[1]
\STATE $\MIII := \MII$
\STATE $h := \max\{i\ |\ \MIII_{i,1} \neq 0\}$ \label{st:DBZIII-h}
\STATE $(\MIII, \PIII) := \Pivot(\MIII, \PIII, \Inv^{(h,n)})$ \label{st:DBZIII-piv1}
\STATE $(\MIII, \PIII) := \Pivot(\MIII, \PIII, \Elem_{n,n}(1 / \MIII_{n,1}))$ \label{st:DBZIII-piv2}
\STATE for $i=k+1$ to $n-1$ do
\STATE \dec $(\MIII, \PIII) := \Pivot(\MIII, \PIII, \Elem_{i,n}(- \MIII_{i,1}))$ \label{st:DBZIII-piv3}
\STATE $(\MIII, \PIII) := \Pivot(\MIII, \PIII, \Rot)$ \label{st:DBZIII-piv4}
\STATE return $(\PIII, \MIII)$
\end{algorithmic} 
\end{algorithm}

\end{subsection}
\begin{subsection}{A naive degree analysis of the generic case}\label{sec:naive_analysis}

When $M$~is generic, \DBZI~outputs a companion matrix, so
\DBZ~terminates at Step~\ref{st:DBZ-fingenerique} in the generic case with
only one companion matrix in the diagonal companion block
decomposition. The proof of this fact will be given in
\S\ref{sec:alg-interp}. Therefore, the complexity of~\DBZI is
interesting in itself.

A lower bound on this complexity is the degree of its output. 
We explain here why a naive analysis of \DBZI 
only gives an exponential upper bound on this degree.\\

Let $M^{(i)}$ be the value of~$\MI$ just before
the $i$th~iteration of the loop at Step~\ref{st:DBZI-forligne}
(in particular, $M^{(1)} = M$), and $M^{(n)}$~the output value.
Remark that the matrices involved in the gauge transformations at Steps \ref{st:DBZI-piv2} and~\ref{st:DBZI-piv3}
commute with one another.
Their product is equal to 
$\Elem_{i+1}(\Row{\Inv^{(i+1,r)}[M^{(i)}]}{i})$.

\begin{lemma}
If $A \in \Mat_n(\bK_d[X])$ is a generic matrix with its first $i-1$
rows in companion form and $T = \Elem_{i+1}(\Row{A}{i})$, then $T[A]$ has
degree $3 d$.
\end{lemma}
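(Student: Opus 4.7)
The plan is to compute $T[A] = (TA + T')T^{-1}$ directly, exploiting that $T$ differs from the identity only in its $(i+1)$-th row, which equals $\Row{A}{i} = (a_1, \dots, a_n)$, a generic row vector of degree~$d$ with $a_{i+1} \neq 0$.

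I would first obtain $T^{-1}$ in closed form: a direct verification shows that $T^{-1}$ agrees with the identity outside row~$i+1$, with $(T^{-1})_{i+1,i+1} = 1/a_{i+1}$ and $(T^{-1})_{i+1,k} = -a_k/a_{i+1}$ for $k \neq i+1$. In particular, every entry of~$T^{-1}$ has degree at most~$d$. Next I would bound $TA + T'$: the matrix~$T'$ is supported on row~$i+1$ with degree at most~$d-1$; for $j \neq i+1$, row~$j$ of~$TA$ equals $\Row{A}{j}$, of degree~$0$ if $j < i$ and at most~$d$ otherwise; row~$i+1$ of~$TA$ is the product $\Row{A}{i} \cdot A$, of degree at most~$2d$. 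Hence $TA + T'$ has degree at most~$2d$. Since $T^{-1} - I$ is also supported on row~$i+1$, the $(j,k)$-entry of $T[A]$ simplifies to
\[
(T[A])_{j,k} = (TA+T')_{j,k} + (TA+T')_{j,i+1} \, (T^{-1}-I)_{i+1,k} ,
\]
a sum of a term of degree at most~$2d$ and a product of terms of degrees at most~$2d$ and~$d$, hence of degree at most~$3d$.

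The main obstacle is to verify that this bound is actually attained when $A$~is generic. For this I would focus on entries~$(i+1,k)$ with $k \notin \{i, i+1\}$: after clearing the common denominator~$a_{i+1}$, the numerator contains triple products of the form $a_l\, a_{i+1}\, A_{l,k}$ for $l \geq i+1$, whose three factors are pairwise-distinct coefficient polynomials of the generic matrix~$A$. The $l = i$ contributions cancel pairwise between the two halves of the numerator (as a direct computation shows, since $a_{i+1} a_i A_{i,k} - a_k a_i A_{i,i+1}=0$), and the remaining degree-$3d$ terms involve matrix entries in three different positions of~$A$, so they cannot simplify further by linear independence. Their combined contribution is therefore a non-zero polynomial in the entries of~$A$, and by genericity (in the spirit of Lemma~\ref{lem:inverse_degree}) the degree of~$T[A]$ is exactly~$3d$.
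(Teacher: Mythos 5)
The paper states this lemma without proof: it appears in \S\ref{sec:naive_analysis} only to motivate the pessimistic bound $\deg(M^{(n)}) \le 3^{n-1}\deg(M)$, which the next section then supersedes. So there is no paper argument to compare against, and what you have supplied is a genuine filling of that gap.

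Your argument is correct. The closed form for $T^{-1}$ (identity outside row~$i+1$, with $(T^{-1})_{i+1,i+1}=1/a_{i+1}$ and $(T^{-1})_{i+1,k}=-a_k/a_{i+1}$) is right, and since both $T$ and $T^{-1}$ differ from~$\Id$ only in row~$i+1$, the rank-one decomposition $(T[A])_{j,k}=(TA+T')_{j,k}+(TA+T')_{j,i+1}(T^{-1}-\Id)_{i+1,k}$ gives the upper bound~$3d$ directly, with rows $j<i$ staying companion and rows $j\neq i+1$ not exceeding~$2d$. For tightness, the $l=i$ cancellation you note is correct because $a_l=A_{i,l}$, so $a_{i+1}a_iA_{i,k}=a_ka_iA_{i,i+1}$, and the surviving degree-$3d$ terms $a_{i+1}a_lA_{l,k}-a_ka_lA_{l,i+1}$ for $l>i$ are pairwise distinct monomials in the entries of~$A$ (no coincidence of position triples is possible given $l>i$ and $k\notin\{i,i+1\}$), so their sum has a non-zero leading coefficient as a polynomial in the $m_{i,j,d}$'s, and genericity concludes.

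One small corner case you leave implicit: when $i=1$ and $n=2$ there is no $k\notin\{i,i+1\}$. The fix is immediate --- run the same computation at $k=i$, where the $l=i$ contributions still cancel (they give $a_{i+1}a_i^2 - a_i^2 a_{i+1}=0$) and the $l>i$ monomials remain distinct --- so this does not affect the correctness of the approach.
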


An exponential bound on the output of~\DBZI is easily
 deduced: $\deg(M^{(n)}) \leq 3^{n-1} \deg(M)$.
We will dramatically improve this bound in the following section.

\end{subsection}
\begin{subsection}{\!\!\!Algebraic interpretation and better bounds}\label{sec:alg-interp}

To prove the announced
tight bound, it could in principle be possible to follow the same pattern as in
Bareiss' method~\cite{Bareiss68}: give an explicit form for the
coefficients of the transformed matrices~$M$, from which the degree
analysis becomes obvious.  But it proves more fruitful to find a link
between \CVM and~\DBZ, and our approach involves almost no computation.

Algorithm~\DBZ reshapes the input matrix by successive elementary gauge
transformations. It completely relies on the shape of $M$, $\MI$, $\MII$,
and~$\MIII$, while the construction of the matrices~$P$ is only a side-effect.
As illustrated in \S\ref{sec:naive_analysis}, this approach is not
well suited for degree analysis.

In this section, we focus on $P$, $\PI$, $\PII$, and~$\PIII$.
It turns out that these matrices allow nice algebraic formulations,
leading to sharp degree and complexity analyses of~\DBZ.

The following lemma, whose omitted proof is immediate from the design
of~\DBZ, provides the complexity of the computation of an elementary
gauge transformation.

\begin{lemma}\label{lem:elem-gauge-transf}
If $t \in \bK_d[X]$ and $M \in \Mat_n(\bK_d[X])$, then
the gauge transformation of $M$ by $\Elem_{i,j}(t)$ can be computed in 
$\bigO(n\, \PM(d)) = \bigOsoft(n d)$ operations in $\bK$.
\end{lemma}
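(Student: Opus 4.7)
The plan is to unfold the definition of the gauge transformation and exploit that $\Elem_{i,j}(t)$ is an elementary matrix, so both $\Elem_{i,j}(t)$ and its inverse act on $M$ by modifying only a single row on the left and a single column on the right. This reduces the whole computation to $\bigO(n)$ polynomial operations in degree~$\bigO(d)$.

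Concretely, I would treat the generic case $i\neq j$ first. Writing $E=\Elem_{i,j}(t)=I+tE_{i,j}$, where $E_{i,j}$ is the matrix with a single $1$ in position $(i,j)$, the identity $E_{i,j}^{2}=0$ gives $E^{-1}=I-tE_{i,j}$ and $E'=t'E_{i,j}$. Then
\[
E[M]=(EM+E')E^{-1}=\bigl(M+tE_{i,j}M+t'E_{i,j}\bigr)\bigl(I-tE_{i,j}\bigr).
\]
Expanding this product, the only entries of~$M$ that get altered are those in row~$i$ (the effect of the left factor $I+tE_{i,j}$ and the shift by $t'E_{i,j}$) and those in column~$j$ (the effect of the right factor $I-tE_{i,j}$). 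Explicitly, one updates $\Row{M}{i}\leftarrow\Row{M}{i}+t\,\Row{M}{j}$, adds $t'$ to entry $(i,j)$, and finally updates $\Col{M}{j}\leftarrow\Col{M}{j}-t\,\Col{M}{i}$ on the just-modified matrix. Each of these two sweeps requires $n$ multiplications of polynomials of degree at most~$\bigO(d)$ (since after the first sweep the degrees involved are still $\bigO(d)$) plus $n$ additions, for a total of $\bigO(n\,\PM(d))$ operations.

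The degenerate case $i=j$ is analogous: $E$ is then diagonal with a single entry~$t$, so $E[M]$ is obtained by scaling the $i$th row by~$t$ and the $i$th column by~$t^{-1}$ (which are $\bigO(n)$ polynomial multiplications in degree~$\bigO(d)$), and by adding $t'/t$ to the diagonal entry $(i,i)$. The only real point to watch is that intermediate entries can have degree a constant times~$d$, but since $\PM(cd)=\bigO(\PM(d))$ for the asymptotic cost functions listed in the introduction, the overall bound $\bigO(n\,\PM(d))=\bigOsoft(n d)$ is preserved. No step poses a serious obstacle; the only care needed is in accounting for degrees so as not to inadvertently fall back on $\PM(nd)$ or worse.
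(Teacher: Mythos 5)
Your proof is correct and spells out exactly what the paper leaves implicit: the paper omits this proof, calling it ``immediate from the design of \DBZ,'' and the intended argument is precisely your observation that $\Elem_{i,j}(t)$ and its inverse each touch only one row (resp.\ one column), so the gauge transformation reduces to $\bigO(n)$ polynomial products of degree $\bigO(d)$, together with the standard superlinearity $\PM(cd)=\bigO(\PM(d))$. The handling of the diagonal case $i=j$ (where entries become rational with denominator $t$) is a necessary detail and you treat it correctly.
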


\begin{subsubsection}{Analysis of\/ $\text{\DBZ}^{\text{I}}$}

We consider an execution of Algorithm~\DBZI on a matrix~$M$ of
denominator $q$, where $\deg(q)$ and $\deg(q M)$ are equal to $d$.
Let $P^{(i)}$ and~$M^{(i)}$ be the values of the matrices~$\PI$
and~$\MI$ when entering the $i$th~iteration of the loop at
Step~\ref{st:DBZI-forligne}, and $P^{(n)}$ and~$M^{(n)}$ the values
they have at Step~\ref{st:DBZI-last-return} if this step is reached.
Set~$k$ to either the last value of~$i$ before returning at
Step~\ref{st:DBZI-touscoeffsnuls} or $n$~if
Step~\ref{st:DBZI-last-return} is reached.  Consequently, the companion
block~$C$ of~$\MI$ has dimension~$k$.  The use of Algorithm \Pivot
ensures the invariant
$M^{(i)} = P^{(i)}[M]$ for all~$i\leq k$.

The following lemma gives the shape of the matrices~$P^{(i)}$.

\begin{lemma} \label{th:formePI}
For each $i$, $P^{(i)} = \VJoin(\iterd{i}{e_1}, Q^{(i)})$ for some
$Q^{(i)}$ whose rows are in~$\{e_2, \hdots, e_n\}$.
\end{lemma}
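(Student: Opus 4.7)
The plan is to proceed by induction on $i$, using Corollary~\ref{th:partial_companion_decomposition} to identify the top block of $P^{(i+1)}$ in one stroke rather than chasing the elementary operations row by row. The base case is immediate: $P^{(1)} = I_n$, whose first row is $e_1 = \iterd{1}{e_1}$ and whose remaining rows are the standard basis vectors $e_2, \dots, e_n$, so $Q^{(1)} = \VJoin(e_2,\dots,e_n)$ works.

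For the inductive step, I would first analyse which rows of $P$ are touched during iteration~$i$. The permutation matrix $\Inv^{(i+1,r)}$ used at Step~\ref{st:DBZI-piv1} only swaps rows $i+1$ and $r$ of $P$ (and $r \geq i+1$), while the matrices $\Elem_{i+1,i+1}(\cdot)$ and $\Elem_{i+1,j}(\cdot)$ used at Steps~\ref{st:DBZI-piv2} and~\ref{st:DBZI-piv3} left-multiply $P$ by matrices that alter only row~$i+1$. Consequently, rows $1, \dots, i$ of $P$ are untouched throughout iteration~$i$, so they still form $\iterd{i}{e_1}$ inside $P^{(i+1)}$; in particular, the first row of $P^{(i+1)}$ is still $e_1$.

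To recover the new row $i+1$, I would invoke the design invariant of~\DBZI: once iteration~$i$ completes, the first $i$ rows of $M^{(i+1)} = P^{(i+1)}[M]$ are in companion shape (rows $1,\dots,i-1$ were already in companion shape before iteration~$i$, and Steps~\ref{st:DBZI-piv2} and~\ref{st:DBZI-piv3} are what puts row~$i$ into companion shape). Corollary~\ref{th:partial_companion_decomposition} with $k = i+1$ then supplies a vector $u$ such that the top $i+1$ rows of $P^{(i+1)}$ equal $\iterd{i+1}{u}$; comparing first rows forces $u = e_1$, so the top block of $P^{(i+1)}$ is $\iterd{i+1}{e_1}$, as desired. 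For the bottom block, the last $n - i - 1$ rows of $P^{(i+1)}$ are either unchanged copies of the corresponding rows of $P^{(i)}$ or the single row displaced by the swap at Step~\ref{st:DBZI-piv1} — namely the old row $i+1$ of $P^{(i)}$, which is the first row of $Q^{(i)}$. By the induction hypothesis, each of these rows lies in $\{e_2, \dots, e_n\}$, so one may take $Q^{(i+1)}$ to consist of exactly these last $n - i - 1$ rows.

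The main obstacle avoided by this plan is the temptation to verify the claim by explicitly tracking how Steps~\ref{st:DBZI-piv2} and~\ref{st:DBZI-piv3} reassemble $\delta^i(e_1)$ from whatever $e_j$ landed in row $i+1$ after the swap: that computation is clumsy and obscures the structure. Corollary~\ref{th:partial_companion_decomposition} lets us read off the identity of row $i+1$ from the \emph{shape} of $M^{(i+1)}$, and this structural reading of $P$ is precisely what will pay off when we move on to sharp degree estimates for~\DBZI.
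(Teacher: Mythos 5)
Your proof is correct and rests on the same two ingredients as the paper's argument: Corollary~\ref{th:partial_companion_decomposition} to identify the top block as an iterated~$\delta$, and a row-by-row analysis of how the elementary left-multiplications in \DBZI act on~$P$. The organization differs slightly — the paper applies the corollary once to get $\VJoin(\iterd{i}{u},Q^{(i)})$ and then runs a separate induction pinning down only the first row and the rows below index~$i$, whereas you carry a stronger inductive hypothesis (all of rows $1,\dots,i$ are $\iterd{i}{e_1}$, justified because the pivot matrices only touch row~$i+1$ and the row~$r$ displaced by the swap) and re-apply the corollary at each step to read off row~$i+1$ — but this is bookkeeping, not a genuinely different route, and both versions correctly handle the displaced row coming from $\Inv^{(i+1,r)}$.
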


\begin{proof}
The first $i-1$ rows of~$P^{(i)}[M]$ have companion shape by design of~\DBZI.
So, by Corollary~\ref{th:partial_companion_decomposition},
there exist a vector~$u$ and a matrix~$Q^{(i)}$ with
$P^{(i)} = \VJoin(\iterd{i}{u}, Q^{(i)})$.

We now prove by induction that $\Row{P^{(i)}}{1} = e_1$ and that for
all~$a > i$, $\Row{P^{(i)}}{a} \in \{e_2, \hdots, e_n\}$.  First for
$i=1$, $P^{(1)} = \Id$, so the property holds.  Now, we assume the
property for~$P^{(i)}$ and consider the $i$th~iteration of the loop at
Step~\ref{st:DBZI-forligne}.  For~$j\neq i+1$, let $T^{(j)}$ denote the
value of the matrix involved in the gauge transformation at
Step~\ref{st:DBZI-piv3} during the $j$th~iteration of the loop at
Step~\ref{st:DBZI-forcolonne}, and let $T^{(i+1)}$~denote the matrix
used at Step~\ref{st:DBZI-piv2}.  Let $r$ denote the integer defined at
Step~\ref{DBZI_r}.

The transformations of~$P$ at Steps \ref{st:DBZI-piv1}, \ref{st:DBZI-piv2},
and~\ref{st:DBZI-piv3} imply
$P^{(i+1)} = T^{(n)} \dotsm T^{(i+2)} T^{(i)} \dotsm T^{(1)} T^{(i+1)}
\Inv^{(i+1,r)} P^{(i)}
$.
The first row of each matrix $T^{(j)}$ and each matrix~$\Inv^{(i+1,r)}$ is~$e_1$, 
so $\Row{P^{(i+1)}}{1} = \Row{P^{(i)}}{1}$
which is, by induction, equal to $e_1$.

For each integer $a > i+1$ and each~$j$, by definition 
$\Row{T^{(j)}}{a} = e_a$.
Therefore, 
$\Row{P^{(i+1)}}{a} = \Row{\Inv^{(i+1,r)}}{a} P^{(i)}$.
Moreover, if~$a=r$, then
$\Row{\Inv^{(i+1,r)}}{a}$ is equal to~$e_{i+1}$;
if not, it is equal to~$e_a$.
In both cases, by induction,
$\Row{\Inv^{(i+1,r)}}{a} P^{(i)} \in \{e_2, \hdots, e_n\}$.
\end{proof}

We are now able to give precise bounds on 
the degree of the output and the complexity of \DBZI, using Lemma~\ref{lem:inverse_degree}.

\begin{theorem}\label{prop:quadratic}
Let $k$~be the dimension of the companion block output from~\DBZI.
The degree of $q^{k-1} P^{(k)}$ is $\bigO(k d)$
and the degree of $\bigl(q^{k (k+1)/2} \det(P^{(k)})\bigr) M^{(k)}$ is~$\bigO(k^2 d)$.
\DBZI has complexity $\bigO(n^2 k\, \PM(k^2 d)) =
\bigOsoft(n^2 k^3 d)$.
\end{theorem}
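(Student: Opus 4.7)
The plan rests on the algebraic interpretation of $P^{(k)}$ given by Lemma~\ref{th:formePI}: its first $k$ rows are $e_1, \delta(e_1), \ldots, \delta^{k-1}(e_1)$ and its remaining rows lie in $\{e_2, \ldots, e_n\}$. Since $\deg(e_1) = 0$, Lemma~\ref{th:deg_delta_k} bounds $q^j \delta^j(e_1)$ by a polynomial of degree at most $jd$. Row $j+1$ of $q^{k-1} P^{(k)}$ is then $q^{k-1-j}$ times $q^j \delta^j(e_1)$, giving a polynomial of degree at most $(k-1)d$ for $j<k$, while each constant row, multiplied by $q^{k-1}$, is trivially bounded by the same degree. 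This yields $\deg(q^{k-1} P^{(k)}) = \bigO(kd)$.

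For the bound on $M^{(k)}$, I would rescale $P^{(k)}$ to a polynomial matrix and then invoke Lemma~\ref{lem:inverse_degree}. Let $B := \diag(1, q, \ldots, q^{k-1}, 1, \ldots, 1)$ and $A := B\, P^{(k)}$. Then $A$ has polynomial entries whose row degrees are $a_i = (i-1) d$ for $i \leq k$ and $a_i = 0$ otherwise, so $D := \sum_i a_i = d k (k-1)/2$. Lemma~\ref{lem:inverse_degree} yields $\deg(\det A) \leq D$ and $\deg(\det A \cdot \Col{(A^{-1})}{j}) \leq D - a_j \leq D$. Using $(P^{(k)})^{-1} = A^{-1} B$ and $\det A = q^{k(k-1)/2} \det(P^{(k)})$, this gives a columnwise bound $\leq D$ on $q^{k(k-1)/2} \det(P^{(k)}) \cdot (P^{(k)})^{-1}$. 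Meanwhile, the rows of $\delta(P^{(k)})$ are either $\delta^j(e_1)$ for $j = 1, \ldots, k$ or $\delta(e_a)$, so Lemma~\ref{th:deg_delta_k} implies $q^k \delta(P^{(k)})$ is polynomial of degree at most $kd$. Writing $M^{(k)} = \delta(P^{(k)}) (P^{(k)})^{-1}$ and multiplying the two bounds, the matrix $q^{k(k+1)/2} \det(P^{(k)}) M^{(k)}$ has entries of degree at most $kd + D = dk(k+1)/2 = \bigO(k^2 d)$.

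Finally, the complexity follows by counting elementary gauge transformations. \DBZI runs at most $k$ iterations of its outer loop, each performing $\bigO(n)$ pivots by matrices of the form $\Elem_{i+1,j}(t)$ or $\Inv^{(i+1,r)}$. The intermediate matrix $\MI$ retains the structural form of Lemma~\ref{th:formePI} between iterations, so the analysis just given applies to every $P^{(i)}$ with $i\leq k$ and yields degree $\bigO(k^2 d)$ at those checkpoints. Lemma~\ref{lem:elem-gauge-transf} then bounds each elementary pivot by $\bigO(n\,\PM(k^2 d))$, and summing gives the announced $\bigO(n^2 k\,\PM(k^2 d)) = \bigOsoft(n^2 k^3 d)$. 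The main obstacle in this last step is controlling the degree of $\MI$ at intermediate states \emph{within} a single outer iteration: the row $i+1$ of $\PI$ being assembled is not yet of the form $\delta^i(e_1)$, and one must argue that during the inner loop it remains a rational combination of the already constructed $\delta^j(e_1)$ with denominator a controlled power of $q$, so that its degree does not exceed the bound already proved for $P^{(i+1)}$.
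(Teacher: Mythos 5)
Your proof takes essentially the same approach as the paper's: Lemma~\ref{th:formePI} combined with Lemma~\ref{th:deg_delta_k} gives the row-degree profile of $P^{(k)}$, Lemma~\ref{lem:inverse_degree} applied to the rescaled polynomial matrix $\diag(1,q,\dots,q^{k-1},1,\dots,1)P^{(k)}$ bounds $\det(P^{(k)})\,(P^{(k)})^{-1}$, multiplying by the bound on $q^k\delta(P^{(k)})$ yields the degree of $M^{(k)}$ via $M^{(k)}=\delta(P^{(k)})(P^{(k)})^{-1}$, and the complexity is obtained by summing the per-pivot cost from Lemma~\ref{lem:elem-gauge-transf}. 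The subtlety you flag at the end --- that the degree of $\MI$ at intermediate states \emph{within} one outer iteration is not automatically covered by the bound proved only at the checkpoints $M^{(i)}$ --- is a legitimate observation, but it is left implicit in the paper's own proof as well, so your proposal is at the same level of rigour and is otherwise a faithful, slightly more explicit rendering of the same argument.
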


It is possible to give more precise bounds on the degrees and 
even to prove that they are reached in the generic case.

\begin{proof}
By Lemmas \ref{th:formePI} and~\ref{th:deg_delta_k},
the degrees of the rows of
$\diag(1,q,\hdots, q^{i-1},1,\hdots,1) P^{(i)}$
are upper bounded by $(0,d,$ $\hdots, (i-1) d, 0, \hdots, 0)$.
Now Lemma~\ref{lem:inverse_degree} implies that
the degree of $q^{i (i-1)/2} \det(P^{(i)}) {P^{(i)}}^{-1}$
is~$\bigO(i^2 d)$.
By the invariant of \Pivot and $P[M] = \delta(P)\, P^{-1}$,
$M^{(i)} = \delta(P^{(i)}) {P^{(i)}}^{-1}$, 
we deduce that the lcm of the denominators in~$M^{(i)}$ divides
$L_i := q^{i (i+1)/2} \det(P^{(i)})$ and that
$\deg(L_i M^{(i)})$ is in $\bigO(i^2 d)$.
The degrees of the theorem follow for~$i=k$.

The computation of~$M^{(i+1)}$ from~$M^{(i)}$ uses $n$ elementary
gauge transformations on $M^{(i)}$, leading by
Lemma~\ref{lem:elem-gauge-transf} to a complexity $\bigO(n^2 \PM(i^2
d))$.  The announced complexity for~\DBZI is obtained upon summation
over~$i$ from~1 to~$k$.
\end{proof}

The output matrix $\PI = P^{(k)}$ is invertible, so $i < k$ implies
$\delta^i(e_1) \not\in \Vect(\iterd{i}{e_1})$.  Therefore, $k$~is
characterised as the least~$i \in \bN\smallsetminus\{0\}$ such that
$\delta^i(e_1) \in \Vect(\iterd{i}{e_1})$.

Informally, for random~$M$, the~$\delta^i(e_i)$ are random, so most
probably $k$~is~$n$.  Indeed, when we experiment~\DBZ on random
matrices, it always computes only one call to~\DBZI and outputs a
single companion matrix.  We make this rigourous.

\begin{theorem}\label{th:genericDBZ}
When $M$ is generic, then \DBZ has the same output as \CVM with initial vector $e_1$.
\end{theorem}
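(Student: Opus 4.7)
The plan is to reduce the statement to a cyclicity condition on the fixed standard basis vector $e_1$, and then verify that condition generically. From the characterization given just before the theorem, the dimension $k$ of the companion block returned by \DBZI is the least positive integer $i$ such that $\delta^i(e_1) \in \Vect(\iterd{i}{e_1})$. Equivalently, $k = n$ if and only if $\iterd{n}{e_1}$ is invertible, that is, $e_1$ is a cyclic vector for $M$.

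Cyclicity of $e_1$ is an algebraic condition in the coefficients $m_{i,j,k}$ of~$M$: after clearing denominators, the numerator of $\det \iterd{n}{e_1}$ is a polynomial in those coefficients. It therefore suffices to exhibit one particular $M$ for which this polynomial does not vanish. I would take $M$ to be the companion matrix with rows $e_2, e_3, \ldots, e_n$ and zero last row. Since $e_1$ is constant, $\delta(e_1) = e_1 M = e_2$, and by induction $\delta^i(e_1) = e_{i+1}$ for $0 \leq i \leq n-1$, while $\delta^n(e_1) = 0$. Consequently $\iterd{n}{e_1} = \Id_n$, which is invertible, and we conclude that $k = n$ for generic $M$.

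Once $k = n$, \DBZI outputs a matrix $\MI = \PI[M]$ that is already in full companion form, so Algorithm \DBZ terminates at Step~\ref{st:DBZ-fingenerique} returning the single pair $(\PI, \MI)$. Applying Lemma~\ref{th:formePI} with $i = n$, the auxiliary block $Q^{(n)}$ has no rows, whence $\PI = \iterd{n}{e_1}$. On the \CVM side, \CVTrial with initial vector $u = e_1$ assigns $P := \iterd{n}{e_1}$ and returns $(P, \delta(P)P^{-1})$; combined with the \Pivot invariant $\MI = \delta(\PI)\PI^{-1}$, the two outputs coincide.

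The only subtle point, in some sense a non-obstacle, is that Theorem~\ref{thm:CVM-generically-ok} supplies cyclicity for generic $u$ at fixed $M$, whereas here $u = e_1$ is fixed and the genericity is transferred to the matrix. Thus one cannot invoke that theorem directly and must instead produce an explicit $M$ witnessing the non-triviality of the determinantal condition; the companion matrix above does this essentially for free. Everything else reduces to the structural Lemma~\ref{th:formePI} and to unpacking the definitions of the two algorithms.
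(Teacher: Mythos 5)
Your proof is correct and follows essentially the same route as the paper: reduce to showing that $e_1$ is generically cyclic, observe that this is a non-vanishing polynomial condition on the coefficients of~$M$ (after clearing denominators), and certify it by exhibiting one explicit matrix with $\iterd{n}{e_1} = \Id_n$. The only differences are cosmetic — the paper uses the cyclic permutation $M = \VJoin(e_2, \ldots, e_n, e_1)$ as witness instead of your nilpotent companion $\VJoin(e_2, \ldots, e_n, 0)$, and you unpack more explicitly (via Lemma~\ref{th:formePI}) why the two outputs literally coincide once $k = n$.
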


\begin{proof}
For indeterminates $q_k$ and~$m_{i,j,k}$, let $\hat M$~be the $n\times n$ matrix
whose $(i,j)$-coefficient $\hat m_{i,j}/\hat q$ has numerator
$\hat m_{i,j} = \sum_{k=0}^d \hat m_{i,j,k} X^k$
and denominator~$\hat q = \sum_{k=0}^d \hat q_k X^k$.
Replacing~$M$ by~$\hat M$ formally in
$\det(\iterd{n}{e_1})$, we obtain a polynomial
in the~$\hat q_k$'s and the~$\hat m_{i,j,k}$'s.
This polynomial is non-zero
since for $M = \VJoin(e_2, \hdots, e_n, e_1)$, $\iterd{n}{e_1} = \Id$.  This
proves that when $M$~is generic, $e_1$~is a cyclic vector
for~$M$, so Shape~(I) is reached with empty $\alpha$ and~$\beta$,
and \DBZ~behaves as~\DBZI and as~\CVM with initial vector~$e_1$.
\end{proof}

\end{subsubsection}
\begin{subsubsection}{Analysis of\/ $\text{\DBZ}^{\text{II}}$}

As for~\DBZI, a naive analysis would lead to
the conclusion that the degrees in~$\PII$ increase
exponentially during execution of~\DBZII.
We give an algebraic interpretation of~$\PII$ that permits a tighter
degree and complexity analysis of~\DBZII.

In this section, we consider the computation of~\DBZII on a
matrix~$\MI$ in Shape~(I) whose block~$C$ has dimension~$k$.  Let
$\qI$ denote the denominator of~$\MI$, and $\dI$~be a common bound on
$\deg(\qI)$ and~$\deg(\qI \MI)$.

Let $\Gamma$ (or~$\Gamma_\beta$) denote the operator on vectors or matrices with
$n-k$~rows defined by $\Gamma(v) = \beta v - v'$.  Observe that, as in
Lemma~\ref{th:deg_delta_k}, $\deg(\qI^k \Gamma^{k}(v))$ is bounded
by $\deg(v) + k \dI$.

The loop at Step~\ref{st:DBZII-nettoyeralpha} processes the~$j$'s in decreasing order.
Let $P^{(j)}$ and $M^{(j)}$ be the values of the matrices~$\PII$
and~$\MII$ just before executing the loop at
Step~\ref{st:DBZII-loop-over-i} in~\DBZII.

\begin{lemma}\label{th:formePII}
For each~$j$, the matrix~$P^{(j)}$ has the shape
\begin{equation}\label{eq:shape-of-Pj}
P^{(j)} = \begin{pmatrix} \Id & 0\\ A^{(j)} & \Id \end{pmatrix}
\end{equation}
where $A^{(j)}$ is a matrix of dimension~$(n-k)\times k$.
Furthermore, for all~$a < j$ and for~$a = k$,
$\Col{A^{(j)}}{a} = 0$
and for $j \leq a < k$,
\begin{equation} \label{eq:col_Aj}
\Col{A^{(j)}}{a} = \Gamma(\Col{A^{(j)}}{a+1}) - \Col{\alpha}{a+1}.
\end{equation}
\end{lemma}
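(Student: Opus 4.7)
I will establish the claim by descending induction on~$j$, from $j = k$ down to $j = 2$. The base case $j = k$ is immediate: before the outer loop starts, $P^{(k)} = \Id_n$, so $A^{(k)} = 0$ and all three stated conditions are vacuously satisfied (the recurrence range $k \leq a < k$ is empty, and every column is zero).

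For the inductive step, I assume the claim for~$j$ and deduce it for $j-1$ by analysing the inner loop at Step~\ref{st:DBZII-loop-over-i}. The first task is to obtain an explicit description of $M^{(j)} := P^{(j)}[\MI]$ via block multiplication. Using $\MI = \begin{pmatrix} C & 0 \\ \alpha & \beta \end{pmatrix}$ together with $(P^{(j)})^{-1} = \begin{pmatrix} \Id & 0 \\ -A^{(j)} & \Id \end{pmatrix}$, a direct computation yields
\[
M^{(j)} = \begin{pmatrix} C & 0 \\ A^{(j)} C + \alpha - \Gamma(A^{(j)}) & \beta \end{pmatrix}.
\]
Now the companion structure of~$C$, combined with the induction hypothesis $\Col{A^{(j)}}{k} = 0$, gives $\Col{(A^{(j)} C)}{a} = \Col{A^{(j)}}{a-1}$ for $a \geq 2$. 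Since moreover $\Col{A^{(j)}}{j-1} = 0$ by induction, the $j$th column of the lower-left block of $M^{(j)}$ simplifies to the key expression
\[
\Col{\alpha}{j} - \Gamma(\Col{A^{(j)}}{j}),
\]
which is precisely the vector that the inner loop will consume.

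To finish, I must show that the inner loop implements the update $\Col{A^{(j-1)}}{j-1} = \Gamma(\Col{A^{(j)}}{j}) - \Col{\alpha}{j}$ while leaving the remaining columns of~$A$ intact. Two short observations achieve this. First, an elementary gauge $T = \Elem_{i', j-1}(t)$ with $i' > k$ and $j-1 < k$ modifies $\MII$ only along row~$i'$ and column~$j-1$; since $j \neq j-1$ and each subsequent inner index $i > i'$ avoids row~$i'$, the entries $\MII_{i,j}$ read in iterations $i'+1, \dots, n$ are untouched, so each equals the value computed above. Second, row $j-1$ of $\PII$ lies in the top $\Id_k$ block and thus equals $e_{j-1}$, so $T \cdot \PII$ merely adds $t\, e_{j-1}$ to row~$i'$, altering only the single entry $A_{i'-k,\,j-1}$ and preserving the block form $\begin{pmatrix} \Id & 0 \\ A & \Id \end{pmatrix}$. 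Accumulating these effects over $i = k+1, \dots, n$ gives the desired formula for $\Col{A^{(j-1)}}{j-1}$, and since $\Col{A^{(j-1)}}{j} = \Col{A^{(j)}}{j}$, the recurrence is in the form required by the lemma. The vanishing conditions at $a < j-1$ and $a = k$ are inherited directly from~$A^{(j)}$. I expect the main obstacle to be the bookkeeping of the inner loop --- checking that neither the still-to-be-consumed entries of $\MII$ nor the block shape of $\PII$ are disturbed at intermediate steps --- which is exactly what the two observations control.
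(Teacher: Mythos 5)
Your proof is correct, and it takes a genuinely different route from the paper's. The paper's argument is synthetic: it observes directly that $P^{(j)}$ is a product of unipotent block-lower-triangular factors $\bigl(\begin{smallmatrix}\Id&0\\B&\Id\end{smallmatrix}\bigr)$, that such factors commute and simply add their $B$-blocks, and that each $B$ is supported on a column index in $\{j,\dots,k-1\}$; it then invokes ``by the design of~\DBZII'' the invariant that the lower-left block of~$M^{(j)}$ ends with $k-j$ zero columns, and reads off the recurrence~\eqref{eq:col_Aj} from column $a+1$ of~\eqref{eq:Aj_block} in one shot, for all $a$ at once. You instead set up a descending induction on~$j$ and track a single pass of the inner loop: you identify the $j$th column of the lower-left block as $\Col{\alpha}{j}-\Gamma(\Col{A^{(j)}}{j})$, show that each pivot $\Elem_{i,j-1}(\cdot)$ touches only row~$i$ and column~$j-1$ of $\MII$ (so the entries $\MII_{i,j}$ still to be read are frozen), and that it touches only entry $(i-k,j-1)$ of~$A$, preserving the block form. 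The payoff of your version is that the paper's ``by design'' clause --- which is a nontrivial invariant, requiring one to check that clearing a new column does not un-clear the old ones --- is replaced by explicit local bookkeeping, making the proof more self-contained. The cost is length: the paper gets the full family of equations~\eqref{eq:col_Aj} in a single stroke, whereas you derive the new instance $a=j-1$ and must separately argue that the previously established instances and vanishing conditions are inherited because the other columns of~$A$ are untouched. Both are sound; yours is closer to a line-by-line verification of the algorithm, theirs to a structural characterization of the transition matrices.
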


\begin{proof}
The matrix~$P^{(k)}$ is the identity, owing to
Step~\ref{st:DBZII-P-is-Id}; for~$k<j$, $P^{(j)}$~is equal to the
product of all the matrices~$T$ previously introduced for the gauge
transformations at Step~\ref{st:DBZII-piv} for greater values of~$j$.
Each of those matrices has a block decomposition of the form
$\left( \begin{smallmatrix} \Id & 0\\ B & \Id \end{smallmatrix}
\right)$.  Therefore, their product~$P^{(j)}$ has
shape~\eqref{eq:shape-of-Pj}, where $A^{(j)}$~is the sum of the
blocks~$B$'s.  Whether $j = k$ or~$j < k$,
for~$a < j$ and for~$a = k$, and for each~$T$,
$\Col{B}{a} = 0$; therefore, ${\Col{A^{(j)}}{a}} = 0$.

Since 
$P^{(j)} = \left( 
\begin{smallmatrix} \Id & 0\\ A^{(j)} & \Id \end{smallmatrix} 
\right)$,
its inverse is
$\left( 
\begin{smallmatrix} \Id & 0\\ -A^{(j)} & \Id \end{smallmatrix} 
\right)$
\begin{equation} \label{eq:Aj_block}
M^{(j)} = P^{(j)}[\MI] =
\begin{pmatrix}
C & 0\\
A^{(j)} C + \alpha - \Gamma(A^{(j)}) & \beta
\end{pmatrix}.
\end{equation}

By the design of~\DBZII, $A^{(j)} C + \alpha - \Gamma(A^{(j)})$
ends with $k-j$ zero columns.
We consider the $(a+1)$th~column of~\eqref{eq:Aj_block}
and use the fact that $\Col{A^{(j)}}{k} = 0$,
to obtain~\eqref{eq:col_Aj}.
\end{proof}

This leads to the degree and complexity analysis of~\DBZII.

\begin{proposition} \label{th:PII}
Both\/ $\deg\bigl(\qI^{k-1} \PII\bigr)$ and\/ $\deg\bigl(\qI^k \MII\bigr)$ are in $\bigO(k \dI)$.
The complexity of~\DBZII is
$\bigO((n-k)^2 k^2 \PM(\dI)) = \bigOsoft((n-k)^2 k^2 \dI)$.
\end{proposition}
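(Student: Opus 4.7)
The plan mirrors the algebraic approach used for Theorem~\ref{prop:quadratic}: exploit Lemma~\ref{th:formePII} to bound $\PII$ first, and only then deduce the bound on $\MII = \PII[\MI]$, thereby avoiding the pitfall of tracking $\MII$ directly (which, as in \S\ref{sec:naive_analysis}, would give exponential growth).

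For the degree bounds, I would first unroll the recurrence~\eqref{eq:col_Aj} from the base case $\Col{A^{(j)}}{k}=0$, getting the closed form
\begin{equation*}
\Col{A^{(j)}}{a} = -\sum_{i=0}^{k-1-a} \Gamma^i\bigl(\Col{\alpha}{a+1+i}\bigr), \qquad j \leq a < k.
\end{equation*}
An analog of Lemma~\ref{th:deg_delta_k} for~$\Gamma$ (proved by the same induction, with~$\beta$ in the role of~$M$) states that, if a vector~$v$ has denominator dividing~$\qI^m$ and polynomial numerator of degree at most~$D$, then $\qI^{m+i}\Gamma^i(v)$ is a polynomial of degree at most $D+i\dI$. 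Applied to the columns of~$\alpha$ (which satisfy $m=1$, $D\leq \dI$), this bound entails $\deg\bigl(\qI^{k-a}\Col{A^{(j)}}{a}\bigr)\leq (k-a)\dI$, hence $\deg(\qI^{k-1}\PII)=\bigO(k\dI)$ after accounting for the identity blocks of~$\PII$. For~$\MII$, formula~\eqref{eq:Aj_block} shows that its lower-left block $AC + \alpha - \Gamma(A)$ has, thanks to $\Col{A}{k}=0$ and the companion shape of~$C$, only its first column $v = \Col{\alpha}{1} - \Gamma(\Col{A}{1})$ nonzero; the $\Gamma$-bound reused on $\Col{A}{1}$ yields $\deg(\qI^k v) = \bigO(k\dI)$, while $C$ and~$\beta$ trivially satisfy $\deg(\qI\, C), \deg(\qI\,\beta) \leq \dI$, giving $\deg(\qI^k\MII)=\bigO(k\dI)$.

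For the complexity, algorithm~\DBZII performs $(k-1)(n-k)$ calls to \Pivot, each with an elementary matrix $\Elem_{i,j-1}(t)$ for $k<i\leq n$ and $1<j\leq k$. The key observation is that the top $k$~rows of~$\MII$ stay in companion shape throughout: the only column operations applied use column index $i>k$, which is supported below row~$k$ in Shape~(II). A direct expansion of $T[\MII]$ then shows that each pivot modifies just a single entry of row~$i$, the $(i,j-1)$ entry, and the $n-k$ entries of column~$j-1$ lying in the bottom block. Each such entry update is a constant number of polynomial operations of degree $\bigO(k\dI)$, so one pivot costs $\bigO((n-k)\,\PM(k\dI))$; summing yields $\bigO\bigl((k-1)(n-k)^2\PM(k\dI)\bigr) = \bigOsoft((n-k)^2 k^2 \dI) = \bigO((n-k)^2 k^2 \PM(\dI))$. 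The delicate point is precisely this last bookkeeping: a naive appeal to Lemma~\ref{lem:elem-gauge-transf} alone would overestimate the per-pivot cost by a factor $n/(n-k)$, so one must genuinely exploit the sparsity of both the pivot matrix and of the current shape of~$\MII$.
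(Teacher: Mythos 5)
Your degree analysis is correct and follows the paper's approach: unroll the recurrence~\eqref{eq:col_Aj} to a closed-form sum of iterated~$\Gamma$'s applied to columns of~$\alpha$, invoke the $\Gamma$-analogue of Lemma~\ref{th:deg_delta_k}, and then read the bound on~$\MII$ off of Equation~\eqref{eq:Aj_block}. You are even more explicit than the paper about why the lower-left block of~$\MII$ collapses to its first column. Your sparsity analysis for the pivots --- which entries actually change under a gauge transformation by $\Elem_{i,j-1}(t)$ in Shape~(II) --- is also accurate and is exactly the point the paper relies on.

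The gap is in the per-pivot cost. You estimate each entry update as a product of two polynomials of degree $\bigO(k\dI)$, costing $\bigO(\PM(k\dI))$, and sum to $\bigO\bigl((k-1)(n-k)^2\PM(k\dI)\bigr)$. But the proposition claims $\bigO\bigl((n-k)^2 k^2 \PM(\dI)\bigr)$, and these two expressions do \emph{not} coincide for a black-box~$\PM$: with quadratic (schoolbook) multiplication, $\PM(k\dI)=\Theta(k^2\dI^2)$, so your bound is $\Theta\bigl((n-k)^2 k^3\dI^2\bigr)$, a factor of~$k$ worse than the claimed $\Theta\bigl((n-k)^2 k^2\dI^2\bigr)$. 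The missing ingredient is that each of the $n-k$ multiplications in a column update is \emph{unbalanced}: one factor is an entry of~$\beta$, of degree $\bigO(\dI)$, while the other (the pivot~$t$, an entry of the current~$M^{(j)}$) has degree $\bigO((k-j)\dI)$. Splitting the larger operand into $\bigO(k-j)$ blocks of degree~$\dI$ gives a multiplication cost of $\bigO((k-j)\PM(\dI))$ per entry, hence $\bigO\bigl((n-k)^2(k-j)\PM(\dI)\bigr)$ per outer iteration, and summation over~$j$ yields the stated $\bigO\bigl((n-k)^2 k^2\PM(\dI)\bigr)$. Your bound does reduce to the correct soft-Oh bound $\bigOsoft((n-k)^2 k^2\dI)$ when $\PM$ is quasi-linear, but the stronger black-box form needs the unbalanced-multiplication accounting. (You correctly warn against a naive use of Lemma~\ref{lem:elem-gauge-transf}, which would cost a factor $n/(n-k)$, but the crucial and easier-to-miss overcount is this one.)
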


\begin{proof}
The degree of~$P^{(j)}$ is equal to the degree of~$A^{(j)}$.
Equation~\eqref{eq:col_Aj} implies, after using~$\Col{A^{(k)}}{a} = 0$,
that for each~$a$,
\[ \Col{A^{(j)}}{a} = - \sum_{i=1}^{k-a} \Gamma^{i-1}(\Col{\alpha}{a+i}). \]
Therefore,
$\deg(\qI^{k-j+1} A^{(j)})$ is $\bigO((k-j) \dI)$.
{}From Equation~\eqref{eq:Aj_block}, it follows that the degree of~$\qI^{k-j+2} M^{(j)}$
is also $\bigO((k-j) \dI)$.
For $j = 2$, we conclude that both
$\deg(\qI^{k-1} \PII)$ and $\deg(\qI^k \MII)$ are~$\bigO(k \dI)$.

The computation of~$M^{(j+1)}$ from~$M^{(j)}$ by the loop at
Step~\ref{st:DBZII-loop-over-i} involves $n-k$ elementary gauge
transformations.  Each one computes ${n-k}$ (unbalanced) multiplications of
elements of $\beta$ with elements of $M^{(j)}$.  The cost is
then $\bigO((n-k)^2 (k-j) \PM(\dI))$.  We obtain the complexity of~\DBZII
by summation over~$j$ from~$2$ to~$k$.
\end{proof}

\end{subsubsection}
\begin{subsubsection}{Analysis of\/ \DBZIII and \DBZ}

Let $(P, C^{(1)}, \hdots, C^{(t)})$ be the output of \DBZ on~$M$.
Corollary~\ref{thm:P[M]-is-diag-of-comp} states that $P =
\VJoin(\iterd{k_1}{u^{(1)}}, \hdots, \iterd{k_t}{u^{(t)}})$.  The
degrees of the matrices transformed by~\DBZ, and thus its complexity,
are obviously linked to the degrees of the vectors~$u^{(i)}$.
Focusing the analysis on the degree of~$u^{(1)}$ will result in the
exponential degree bound~$\bigO\bigl(n^{\bigO(n)} d\bigr)$, which we
believe is not pessimistic.  In turn, this seems to be a lower bound
on the complexity of \DBZ.
\begin{conjecture}
The complexity of~\DBZ is more than exponential in the worst case.
\end{conjecture}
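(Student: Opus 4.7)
The plan is to exhibit an explicit family of matrices $\{M_n\}_{n \geq 1}$ of dimension~$n$ and degree $d = \bigO(1)$ on which \DBZ repeatedly triggers the restart branch (Steps~8--10 of the main algorithm), each restart multiplying the degrees of the working matrix by a non-trivial factor. A concrete super-exponential lower bound on the output size would then, via Corollary~\ref{thm:P[M]-is-diag-of-comp} combined with Lemma~\ref{lem:inverse_degree}, translate into a super-exponential lower bound on the running time.

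First, I would design $M_n$ so that the first call to \DBZI halts at Step~\ref{st:DBZI-touscoeffsnuls} with a companion block of very small dimension $k_1$ (ideally $k_1 = 1$). This forces the algorithm into \DBZII and then, by arranging that the resulting column vector~$v$ is non-zero, into \DBZIII. By Proposition~\ref{th:PII} and Equation~\eqref{eq:Aj_block}, the common denominator and numerator degrees of $\MII$ blow up by a factor proportional to $\qI^{k}$, and the elementary pivots and cyclic permutation of \DBZIII preserve this growth. Consequently, the recursive call at Step~\ref{recommencer} of \DBZ receives an $(n-1)$-dimensional matrix~$\MIII$ of degree at least $c \cdot d$ for some constant $c > 1$ depending on $k_1$.

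Second, I would maintain, by induction on the recursion depth, the invariant that after~$i$ restarts the current matrix still satisfies the failure condition: \DBZI halts early and the vector~$v$ is non-zero. Telescoping the degree-multiplication over $i = 1, \dots, n-1$ would produce degrees of order $d \cdot \prod_{i} c_i$, where each~$c_i$ scales with the size of the then-current companion block; choosing the family so that $c_i = \Omega(i)$ yields an $n^{\Theta(n)} d$ bound, matching the heuristic $\bigO(n^{\bigO(n)} d)$ mentioned just before the conjecture, and hence a super-exponential lower bound on the complexity of \DBZ.

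The hard part is twofold. Structurally, one must design~$M_n$ so that at every level of the recursion the failure pattern persists: since each restart modifies the matrix in a non-trivial algebraic way, maintaining the precise shape conditions that cause \DBZI to halt early and $v$ to remain non-zero is delicate and probably requires an inductive, self-similar construction. Algebraically, one must rule out accidental cancellations, i.e.\ prove a matching lower bound on the degree of the cyclic vector~$u^{(1)}$ produced at the deepest recursion level; Lemma~\ref{lem:inverse_degree} gives only a generic sharpness statement, and the restarted matrices are far from generic. A natural route is to work symbolically over a ring of indeterminates parametrising~$M_n$, track the iterated determinants appearing in Lemmas~\ref{th:formePI} and~\ref{th:formePII}, and verify by a Schwartz--Zippel-style argument that they specialise to non-zero polynomials on the constructed family. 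This last verification is what I expect to be the genuine obstacle, and is presumably the reason the statement is left as a conjecture.
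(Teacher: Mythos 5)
This statement is a \emph{conjecture}, and the paper offers no proof of it. The authors explicitly say: ``Unfortunately, we have been unable to exhibit a matrix~$M$ leading to an execution with more than one recursive call, such cases being \emph{very} degenerate.'' What the paper does provide is circumstantial evidence: the degree-tracking analysis around Proposition~\ref{th:unew}, which yields only the (likely pessimistic) upper bound $\deg(u^{(r)}) = \bigO(r\, d\, n^{2r+3} \deg(u^{(0)}))$ and leaves open whether anything close to this is actually attained.

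Your proposal is a plan, not a proof, and to your credit you say so. It aligns with the paper's heuristic reasoning (each restart should compound degree growth geometrically), and you correctly locate the real obstacles: (i) designing a self-similar family $M_n$ that survives $n-1$ restarts while keeping \DBZI halting early and $v$ nonzero at every level, and (ii) establishing a matching \emph{lower} bound on degrees without the crutch of genericity, since the matrices appearing deep in the recursion are, by construction, far from generic (indeed they are exactly the degenerate cases). Both of these are precisely the difficulties the authors ran into, and neither your proposal nor the paper resolves them. One small caution about the mechanism you invoke: Proposition~\ref{th:PII} and Equation~\eqref{eq:Aj_block} give \emph{upper} bounds on the degree growth through \DBZII; they do not by themselves show that the degrees actually grow by a factor $c_i = \Omega(i)$, which is what your telescoping argument needs. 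A Schwartz--Zippel argument would work if the relevant determinants are generically nonzero on your family, but parametrising the family so that the forced early-halting of \DBZI does not also force those determinants to vanish is exactly the tension that makes this hard. In short: your plan reproduces the paper's informal understanding of why the conjecture should be true, but it does not close either gap, so it is not a proof and should not be labelled as one.
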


We shall show that this explosion originates in the recursive calls at
Step~\ref{finrecommencer}.  Unfortunately, we have been unable to
exhibit a matrix~$M$ leading to an execution with more than one
recursive call, such cases being \emph{very\/} degenerate.

We now drop the exponent and write $u$ for~$u^{(1)}$.  As $u$~can only
be modified at Step~\ref{finrecommencer}, we consider the initial flow
of an execution, as long as the~$\MI$'s are not companion and
the~$v$'s are non-zero; this excludes any return at Step
\ref{st:DBZ-fingenerique} or~\ref{finbloc}.

Set $P^{(\text{I}, r)}$, $M^{(\text{I}, r)}$, $P^{(\text{II}, r)}$,
and $P^{(\text{III}, r)}$ to the values of $\PI$, $\MI$, $\PII$, and
$\PIII$ just before the $r$th~call at Step~\ref{finrecommencer}.  The
matrix~$M^{(\text{I}, r)}$ has Shape~(I) and is by construction
gauge-similar to~$M$: for some invertible~$P^{(r)}$, $P^{(r)}[M] =
M^{(\text{I}, r)}$, and, by Lemma~\ref{th:C0**}, there
exist~$u^{(r)}$, $Q^{(r)}$, and~$k_r$ such that
\[ P^{(r)} = \VJoin\bigl(\iterd{k_r}{u^{(r)}}, Q^{(r)}\bigr). \]

This leads to a new interpretation of~\DBZ:
it tests several vectors $u^{(r)}$ and 
iterates $\delta$ on them to construct the matrices~$P^{(r)}$,
until $P^{(r)}[M]$~is
companion (Step~\ref{st:DBZ-fingenerique}) or 
allows a block decomposition (Step~\ref{finbloc}).

\begin{proposition} \label{th:unew}
Write~$P^{(\text{II}, r)}$ by blocks as
$\left( \begin{smallmatrix}
\Id & 0\\
A^{(r)} & \Id
\end{smallmatrix} \right)$.
There exist an integer~$h$ and a rational function~$w$ such that
\begin{equation}
u^{(r+1)} = w 
\Row{(A^{(r)}\, \iterd{k_r}{u^{(r)}} + Q^{(r)})}{h}.
\end{equation}
\end{proposition}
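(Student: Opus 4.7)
The plan is to follow the first row of the cumulative gauge transformation
\[ P^{(r+1)} = P^{(\text{I}, r+1)} P^{(\text{III}, r)} P^{(\text{II}, r)} P^{(r)} \]
through the three sub-algorithms that carry $M^{(\text{I}, r)}$ to $M^{(\text{I}, r+1)}$. Since $u^{(r+1)} = \Row{P^{(r+1)}}{1}$, and since Lemma~\ref{th:formePI} applied to the execution of \DBZI at iteration $r+1$ gives $\Row{P^{(\text{I}, r+1)}}{1} = e_1$, I would immediately reduce the problem to computing
\[ u^{(r+1)} = \Row{\bigl(P^{(\text{III}, r)} P^{(\text{II}, r)} P^{(r)}\bigr)}{1}. \]

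Next, I would combine the block form of $P^{(\text{II}, r)}$ given by Lemma~\ref{th:formePII} with $P^{(r)} = \VJoin(\iterd{k_r}{u^{(r)}}, Q^{(r)})$; a direct block multiplication yields
\[ N := P^{(\text{II}, r)} P^{(r)} = \VJoin\bigl(\iterd{k_r}{u^{(r)}},\; A^{(r)}\iterd{k_r}{u^{(r)}} + Q^{(r)}\bigr). \]

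The bulk of the argument is to evaluate $\Row{(P^{(\text{III}, r)} N)}{1}$ by walking through the elementary pivots of \DBZIII in order and tracking only what happens to row~$1$ and to row~$n$ of the evolving matrix. The key observations are: (i)~the swap $\Inv^{(h,n)}$, the scaling $\Elem_{n,n}(\cdot)$ and the loop pivots $\Elem_{i,n}(\cdot)$ for $k_r < i < n$ all fix row~$1$, since the index~$h$ chosen at Step~\ref{st:DBZIII-h} satisfies $h > k_r$ (the constraint $v \neq 0$ forces the maximum to lie below the companion block); (ii)~the same pivots send row~$n$ successively to $\Row{N}{h}$ (after the swap), then to $(1/M^{(\text{II}, r)}_{h,1})\,\Row{N}{h}$ (after the scaling), and then leave it unchanged; (iii)~the final step of \DBZIII, left-multiplication by $\Rot$, rotates row~$n$ into row~$1$. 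Hence
\[ u^{(r+1)} = \frac{1}{M^{(\text{II}, r)}_{h,1}}\,\Row{N}{h}, \]
and since $h > k_r$, the row $\Row{N}{h}$ coincides with $\Row{(A^{(r)}\iterd{k_r}{u^{(r)}} + Q^{(r)})}{h - k_r}$. Taking $w := 1/M^{(\text{II}, r)}_{h,1}$ and relabelling $h$ to index the lower block gives the announced formula.

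The main obstacle is purely organisational: one must check that no operation inside \DBZIII ever disturbs the first $k_r$ rows (which relies on the design constraints that $h > k_r$ at Step~\ref{st:DBZIII-h} and that the inner loop ranges over $i \in \{k_r+1, \ldots, n-1\}$), and that the scaling factor introduced at Step~\ref{st:DBZIII-piv2} is indeed recognisable as a rational function of the entries of $M^{(\text{II}, r)}$. No non-trivial algebraic identity is required beyond the block form already supplied by Lemmas~\ref{th:formePI} and~\ref{th:formePII}.
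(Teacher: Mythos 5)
Your proof is correct and follows essentially the same route as the paper's: reduce via Lemma~\ref{th:formePI} to the first row of $P^{(\text{III}, r)} P^{(\text{II}, r)} P^{(r)}$, use the block form from Lemma~\ref{th:formePII}, and observe that $P^{(\text{III}, r)}$'s only effect on row~$1$ (after $\Rot$) is to bring in $w$ times a single row from below the companion block. The only cosmetic difference is that you track the rows of $N = P^{(\text{II},r)}P^{(r)}$ forward through the pivots of \DBZIII, whereas the paper factors $P^{(\text{III},r)} = \Rot\, S\, \Inv^{(h,n)}$ and reads off $\Row{P^{(\text{III},r)}}{1} = w\,e_h$ directly; the two are the same calculation read in different orders.
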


\begin{proof}
By definition, $u^{(r+1)}$ is the first row of~$P^{(r+1)}$.
Step~\ref{finrecommencer} sets $P^{(r+1)} = P^{(\text{I}, r+1)}
P^{(\text{III}, r)} P^{(\text{II}, r)} P^{(r)}$.
Lemma~\ref{th:formePI} implies $\Row{P^{(\text{I},r+1)}}{1} = e_1$; in
addition, by Lemma~\ref{th:formePII}, $P^{(\text{II}, r)}$~has a block
decomposition as in the theorem statement.  So,
$u^{(r+1)} = \Row{P^{(\text{III}, r)}}{1} 
\left( \begin{smallmatrix}
\Id & 0\\
A^{(r)} & \Id
\end{smallmatrix} \right)
\left( \begin{smallmatrix}
\iterd{k_r}{u^{(r)}}\\ 
Q^{(r)}
\end{smallmatrix} \right)$.
The proof is now reduced to the existence of~$h > k_r$ such that
$\Row{P^{(\text{III}, r)}}{1} = e_h$.

In Algorithm~\DBZIII, the matrices 
involved in the gauge transformations at Step~\ref{st:DBZIII-piv3} 
commute with one another.
Let $S$ denote their product.
Set $h$ to the integer defined at Step~\ref{st:DBZIII-h}, then $h > k_r$ and
$P^{(\text{III}, r)} = \Rot\, S\, \Inv^{(h,n)}$.
By construction, 
$\Row{\Rot}{1} = e_n$,
$\Row{S}{n} = w e_n$ 
for a certain rational function $w$
defined at Step~\ref{st:DBZIII-piv2},
and $\Row{\Inv^{(h,n)}}{n} = e_h$.
This ends the proof.
\end{proof}

We now express the growth of $\deg(u^{(r)})$ with respect to~$r$.
Let $d_{\text{I}, r}$ denote the degrees
of the numerators and denominators
of $P^{(r)}[M]$, so in particular a bound for~$u^{(r)}$.
Now, Proposition~\ref{th:PII} implies that
the degree of the numerators and denominators
of $A^{(r)}$ are $\bigO(k_r d_{\text{I}, r}) = \bigO(k_r^3 d + k_r^2\deg(u^{(r)}))$.

The rational function $w$
of the theorem is the inverse of an element of
$M^{(\text{II},r)}$,
so the degree of its numerator and denominator
are $\bigO(k_r^3 d + k_r^2\deg(u^{(r)}))$.
Combined with Proposition~\ref{th:unew}, this implies
$\deg(u^{(r+1)}) = \bigO(k_r^3 d + k_r^2 \deg(u^{(r)}))$.

We could not deduce from Proposition~\ref{th:unew} any polynomial bound on
the degree of the numerator of $u^{(r)}$, but we get
$\deg(u^{(r)}) = \bigO(r d n^{2r+3} \deg(u^{(0)}))$.
The worst case of this bound is obtained when $r=n-1$.

\end{subsubsection}
\end{subsection}
\end{section}

\subsection{Link with the Abramov-Zima algorithm}\label{sec:az}

In~\cite{AbZi97}, Abramov and Zima presented
an algorithm, denoted by~$\AZ$ in the following,
that computes the solutions of inhomogeneous
linear systems~$Y' = M Y + R$
in a general Ore polynomial ring setting.
It starts by a partial uncoupling to obtain
a differential equation that cancels~$Y_1$, solves it and
injects the solutions in the initial system.
We reinterpret here its computations of a partial uncoupling, focusing
on the case of systems~$Y' = M Y$ where~$M$ is a polynomial matrix,
and we analyse the complexity in the \emph{generic case}.

\medskip
\noindent{\bf Step~1.}
  Introduce a new vector~$Z$ of dimension~$\ell \leq n$ (generically with equality),
  such that $Z_1 = Y_1$ and, for $i>1$, $Z_i$ is a linear combination
  of $Y_i,\ldots,Y_n$, such that
  where~$\beta$ is a lower-triangular matrix augmented by~$1$s on its
  upper-diagonal: $\beta_{i,i+1} = 1$ for all~$1 \leq i \leq \ell-1$.

\smallskip
\noindent{\bf Step~2.}
  Eliminate the variables~$Z_2, \hdots, Z_\ell$
  by linear combinations on the system obtained in Step~1 to get
  a differential equation of order~$\ell$ that cancels~$Y_1$.

\begin{theorem}\label{th:genericAZ}
Let~$M$ be a generic matrix of dimension~$n$
with polynomial coefficients of degree~$d$.
Then, the complexity of~$\AZ$ to uncouple the system~$Y'=MY$
is~$\bigOsoft(n^5 d)$.
\end{theorem}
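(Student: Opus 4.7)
The strategy is to bound separately the complexity of the two steps of \AZ, showing that each contributes $\bigOsoft(n^5 d)$ in the generic case.

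First, I would verify that in the generic case $\ell = n$, so that Step~1 performs a full uncoupling. This mirrors the argument of Theorem~\ref{th:genericDBZ}: the failure of $\ell = n$ at any intermediate stage forces the vanishing of some non-zero polynomial in the entries of~$M$, which holds only on a Zariski-closed proper subset. So we may assume that Step~1 produces a genuine $n\times n$ Hessenberg gauge-similar system $Z'=\beta Z$.

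Next, I would analyse Step~1. The transformation matrix~$P$ has first row~$e_1$ and is built by a sequence of $n$ row operations that reduce~$M$ to Hessenberg form $\beta$. The structural resemblance with \DBZI makes it possible to reuse the algebraic interpretation of Lemma~\ref{th:formePI} combined with Lemma~\ref{lem:inverse_degree}, yielding $\deg(P) = \bigO(nd)$ and $\deg(\beta) = \bigO(n^2 d)$. The complexity then follows by summing over the $\bigO(n)$ stages, each of which performs $\bigO(n)$ elementary gauge transformations on matrices of degree up to $\bigO(n^2 d)$; by Lemma~\ref{lem:elem-gauge-transf}, this costs $\bigOsoft(n \cdot n^2 d)$ per step, hence $\bigOsoft(n^5 d)$ overall.

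For Step~2, I would organise the elimination of $Z_2, \ldots, Z_n$ as the iterative computation of differential operators $L_i$ of order $i-1$ with $Z_i = L_i(Z_1)$, governed by the Ore-polynomial recurrence $L_{i+1} = \partial L_i - \sum_{j \leq i} \beta_{i,j} L_j$. Once tight bounds $\deg(L_i) = \bigO(n^2 d)$ are secured, each update reduces to $\bigO(n)$ Ore-polynomial operations on operators of order $\bigO(n)$ with coefficient degree $\bigO(n^2 d)$, costing $\bigOsoft(n^4 d)$ per stage and $\bigOsoft(n^5 d)$ in total.

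The main obstacle is the degree analysis of Step~2. A naive induction only bounds $\deg(L_i) = \bigO(i \cdot n^2 d)$ (one $\beta$-factor is accumulated per stage), leading to a spurious $\bigOsoft(n^6 d)$ estimate. Reaching the output-matching bound $\bigO(n^2 d)$ requires an algebraic interpretation in the spirit of~\S\ref{sec:alg-interp}, expressing the coefficients of $L_i$ as minors of a Wronskian-like matrix built from the rows of~$\iterd{i}{e_1}$, so that Lemma~\ref{lem:inverse_degree} exposes the same kind of massive simplifications that make Theorem~\ref{prop:quadratic} go through for~\DBZI.
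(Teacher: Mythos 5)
Your high-level accounting of the complexity (pivot counts, Lemma~\ref{lem:elem-gauge-transf}, multiplication costs) is fine, but the two degree bounds that make the count work are not actually established, and one of the claims you do make is wrong.

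First, you conflate two different matrices. The Step-1 transformation matrix --- called~$U$ in the paper, upper-triangular with first row~$e_1$, satisfying $U[M]=\beta$ --- is \emph{not} of the form $\VJoin(\iterd{i}{e_1}, Q^{(i)})$, so Lemma~\ref{th:formePI} does not apply to it; that lemma is specifically about the \DBZI\ matrices, which have a completely different (permutation-like lower part) structure. The matrix that does have degree $\bigO(nd)$ is the CVM matrix $P=\Delta^n(e_1)$, which is neither the Step-1 nor the Step-2 transformation. The actual degree of~$U$ is $\bigO(n^2 d)$, not $\bigO(nd)$, and your asserted bound $\deg(\beta)=\bigO(n^2 d)$ is stated without a supporting argument.

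Second, you correctly identify the degree growth in Step~2 as the main obstacle, but leave it open with a vague gesture toward Wronskian-like minors. The ingredient you are missing --- and the one that closes both gaps simultaneously --- is the identity $P = LU$: the product of the lower-triangular Step-2 pivots is a unipotent lower-triangular matrix~$L$, the Step-1 matrix~$U$ is upper-triangular, and from $P[M] = C = L[\beta] = L[U[M]]$ one gets $P = LU$. This realises $L$ and~$U$ as precisely the LU factors of the degree-$\bigO(nd)$ matrix $P = \Delta^n(e_1)$, so the Bareiss-type bound~\cite{Bareiss68} yields $\deg(L),\deg(U),\deg(L^{-1}) = \bigO(n^2 d)$ at once. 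Then $\deg(\beta) = \deg(L^{-1}[P[M]]) = \bigO(n^2 d)$ (using the known $\deg(P[M]) = \bigO(n^2 d)$ from the CVM table), and the Step-2 operators inherit the same degree bound from~$L$. Without this factorisation, your Step-2 ``obstacle'' remains genuinely open, and your proposed route through \DBZ-style lemmas does not work because $U$ and the \DBZ~matrices have different shapes.
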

\begin{proof}
When~$M$ is generic, the minimal monic differential equation that cancels~$Y_1$
has order~$n$ and its coefficients of orders~$0$ to~$n-1$ 
are the coefficients of the vector~$\delta^n(e_1) P^{-1}$,
where we have set $P=\Delta^n(e_1)$.
Thus, for generic~$M$, the integer~$\ell$ defined in Step~1
is equal to~$n$.

Step~1 implies that there is an upper-triangular matrix~$U$
such that~$U_1 = e_1$, $Z = U Y$ and~$U[M] = \beta$. 
At Step~2, the eliminations of the variables~$(Z_i)_{2 \leq i \leq n}$
are carried out by pivot operations. They transform the system $Z'= \beta Z$
into a new system~$W' = C W$ with~$W_1 = Z_1 = Y_1$ and~$C$ is
a companion matrix, which is equal to~$P[M]$.
Because of the particular shape,
the matrices matching those pivots operations
are lower-triangular with~$1$s on their diagonal.
Their product is a matrix~$L$ such that~$W = L Z$;
it is also lower-triangular with~$1$s on its diagonal.
Since~$P[M] = C = L[\beta]$ and~$\beta = U[M]$, $P=LU$.

By construction, the degree of~$P$ is~$\bigO(n d)$.
The matrices~$L$, $U$, and~$L^{-1}$ 
of its LU decomposition have degrees $\bigO(n^2 d)$~\cite{Bareiss68}.
Thus, the degree of~$\beta = U[M] = L^{-1}[P[M]]$ is~$\bigO(n^2 d)$.
Steps~1 and~2 of~\AZ compute~$\bigO(n^2)$ pivot operations,
each one involving~$\bigO(n)$ manipulations (additions and products) of polynomial coefficients
of degree~$\bigO(n^2 d)$. This leads to the announced complexity for~\AZ.
\end{proof}

It can be proved that the product $L\cdot U$ in this proof is the
LU-decomposition of~$P$; for a non-generic~$M$,
\cite{AbZi97}~implicitly obtains an LUP-decomposition.

The degree bounds in the previous proof  
are reached in our experiments:
the maximal degrees of the numerator of the matrices~$\beta$
computed for random matrices~$M$ of dimension~$n$ from~$1$ to~$6$
with polynomial coefficients of degree~$1$ are, respectively,
1, 2, 5, 10, 17, and~26.

\begin{section}{Implementation}\label{sec:implementation}

\begin{table}[b]
\begin{scriptsize}
\tabcolsep2pt
\begin{center}
\begin{tabular}{l|c|c|c|c|c|c|c}
\multicolumn{5}{c|}{}                            & $n = 100$ & $n=5$   & $n=30$\\
Algorithm & $c$ & $e$ & \multicolumn{2}{c|}{$p$} & $d = 1$   & $d=100$ & $d=30$\\
\hline
\CVM & $6.8\ 10^{-7}$ & $1.81$ & $\theta+1$ & $3.88$ & $103.11$ & $3.53$ & $155.41$\\ 
\DBZ & $7.5\ 10^{-8}$ & $1.61$ & $5$ & $6.01$ & $\infty$ & $2.3$ & $14409$\\
\constbal & $2.4\ 10^{-6}$ & $1.01$ & $\theta+1$ & $3.00$ & $12.55$ & $0.5$ & $2.7$ \\
\constnaive & $3.3\ 10^{-9}$ & $1.90$ & $4$ & $4.00$ & $1.24$ & $0.2$ & $1.64$ \\
\resolstorj & $8.2\ 10^{-7}$ & $1.75$ & $\theta+1$ & $3.87$ & $83.60$ & $3.48$ & $153.16$\\
\resolnaive & $4.8\ 10^{-8}$ & $1.52$ & $5$ & $6.22$ & $106352$ & $0.85$ & $13806$ \\
\hline
Output size & \multicolumn{4}{c|}{} & $1000300$ & $13010$ & $810960$
\end{tabular}
\end{center}
\end{scriptsize}
\vskip-12pt
\begin{small}
\caption{Experimental complexity of~\DBZ, \CVM, and their sub-algorithms; common output size match $n^3 d$}\label{tab:bivariate_complexities}
\end{small}
\end{table}

\begin{figure}[t]
\begin{scriptsize}
\tabcolsep2pt
\begin{center}
\includegraphics[scale = 0.3]{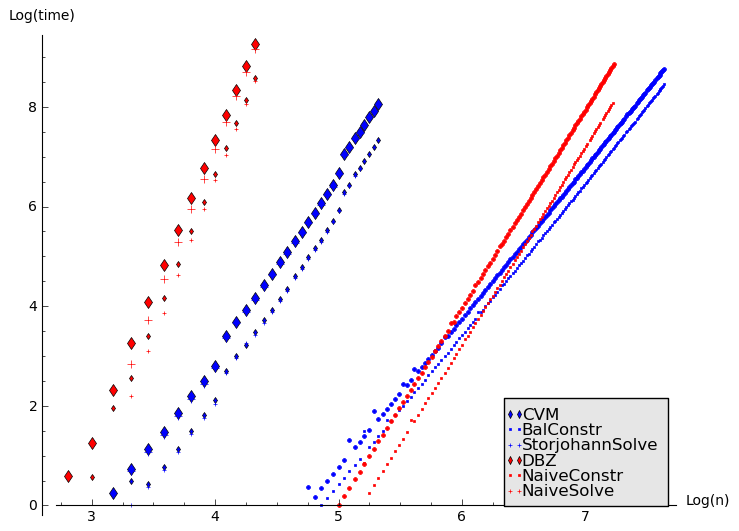}
\end{center}
\end{scriptsize}
\vskip-12pt
\begin{small}
\caption{Timings for \DBZ and \CVM on input matrices~$M$ of dimension~$n$ and coefficients with fixed degree~$d=15$ (smaller marks) or~$d=20$ (larger marks)}\label{fig:graphic}
\end{small}
\end{figure}

We have implemented the \DBZ algorithm
and several variants of the \CVM algorithm
to evaluate the pertinence of our theoretical complexity analyses and
the practical efficiency
of our algorithmic improvements.
Because of its fast implementations of
polynomial and matrix multiplications,
we chose the system Magma, using its release~V2.16-7 on Intel Xeon
5160 processors (3~GHz) and 8~GB of RAM.

Our results are summarised in Table~\ref{tab:bivariate_complexities}
and Figure~\ref{fig:graphic}.  We fed our algorithms with matrices of
size~$n$ and coefficients of degree~$d$ over~$\bZ/1048583\,\bZ$.
Linear regression on the logarithmic rescaling of the data was used to
obtain parameters $c$, $e$, and~$p$ that express the practical
complexity of the algorithms in the form~$c d^e n^p$.  For the
exponents~$p$, both theoretical and experimental values are shown for
comparison.  Sample timings for particular~$(n,d)$ are also given.
\constbal and \constnaive (resp. \resolstorj and~\resolnaive) compute
the matrix~$P$ (resp.~$C$) of Algorithm~\ref{algo_rand_Cv} with or
without the algorithmic improvements introduced in
Theorem~\ref{th:CVTcomplexity}.

In Figure~\ref{fig:graphic},
each algorithm shows two parallel straight lines,
for~$d=15$ and~20,
as was expected on a logarithmic scale.
The improved algorithms are more efficient
than their simpler counterparts
when~$n$ and~$d$ are large enough.

The theory predicts~$e = 1$ for all algorithms.  Observing different
values suggests that too low values of~$d$ have been used to reach the
asymptotic regime.  We also remark that, with respect to~$d$, \DBZ,
\constbal, and \resolstorj have slightly better practical complexity
than their respective couterparts \CVM, \constnaive, and \resolnaive.

The practical exponent~$p=3.00$ of \constbal is smaller than~$\theta+1$.
The algorithm consists of $n$~executions of a loop
that contains a constant number of scans of matrices
and matrix multiplications.
By analysing their contributions to
the complexity separately, we obtain $2.5\ 10^{-6} d^{0.97} n^{3.00}$
for the former, and $5.2\ 10^{-8} d^{1.34} n^{3.19}$ for the latter.
In the range of~$n$ we are analysing, 
the first contribution dominates because of its constant,
and its exponent~$3.00$ is the only one visible on
the experimental complexity of \constbal.

It is also visible that \resolstorj is the dominating
sub-algorithm of~\CVM.  Besides,
our implementation of \resolstorj is
limited by memory and cannot handle
matrices of dimension~$n$ over~130.
This bounds the size of the inputs
manageable by our \CVM implementation.
A native Magma implementation of Storjohann's algorithm
should improve the situation. However,
our implementation already beats the naive
matrix inversion, so that the experimental exponent~$3.88$
of \resolstorj is close to~$\theta + 1$.

The experimental exponent~$p$ of \DBZ is~$6.01$ instead of~$5$.
This may be explained by the fact that the matrix coefficients
that \DBZ handles are fractions.
Instead, in \constbal, the coefficients are polynomial:
denominators are extracted at the start
of the algorithm and reintroduced at the end.

\end{section}

\begin{section}{Conclusion}

It would be interesting to study the relevance of uncoupling applied
to system solving, and to compare this approach to direct methods.
It would also be interesting to combine \CVM and \DBZ into a hybrid
algorithm, merging speed of \CVM and generality of \DBZ.

\end{section}

\scriptsize


\begin{thebibliography}{10}
\vspace{0.3cm}

\bibitem{AbZi97}
S.~Abramov and E.~Zima.
\newblock A universal program to uncouple linear systems.
\newblock In {\em Proceedings of CMCP'96}, pages 16--26, 1997.

\bibitem{Abramov99}
S.~A. Abramov.
\newblock E{G}-eliminations.
\newblock {\em J. Differ. Equations Appl.}, 5(4-5):393--433, 1999.

\bibitem{Adjamagbo88}
K.~Adjamagbo.
\newblock Sur l'effectivit\'e du lemme du vecteur cyclique.
\newblock {\em C. R. Acad. Sci. Paris S\'er. I Math.}, 306(13):543--546, 1988.

\bibitem{Bareiss68}
E.~H. Bareiss.
\newblock Sylvester's identity and multistep integer-pre- \-serving {G}aussian
  elimination.
\newblock {\em Math. Comp.}, 22:565--578, 1968.

\bibitem{Barkatou93}
M.~A. Barkatou.
\newblock An algorithm for computing a companion block diagonal form for a
  system of linear differential equations.
\newblock {\em Appl. Algebra Engrg. Comm. Comput.}, 4(3):185--195, 1993.

\bibitem{BrPe96}
M.~Bronstein and M.~Petkov{\v{s}}ek.
\newblock An introduction to pseudo-linear algebra.
\newblock {\em Theor. Comput. Sci.}, 157(1):3--33, 1996.

\bibitem{CaKa91}
D.~G. Cantor and E.~Kaltofen.
\newblock On fast multiplication of polynomials over arbitrary algebras.
\newblock {\em Acta Inform.}, 28(7):693--701, 1991.

\bibitem{ChKo02}
R.~C. Churchill and J.~Kovacic.
\newblock Cyclic vectors.
\newblock In {\em Differential algebra and related topics}, pages
  191--218. 2002.

\bibitem{Cluzeau03}
T.~Cluzeau.
\newblock Factorization of differential systems in characteristic {$p$}.
\newblock In {\em Proc. ISSAC'03}, pages 58--65. ACM, 2003.

\bibitem{Cope1936}
F.~T. Cope.
\newblock Formal {S}olutions of {I}rregular {L}inear {D}ifferential
  {E}quations. {P}art {II}.
\newblock {\em Amer. J. Math.}, 58(1):130--140, 1936.

\bibitem{Dabeche79}
A.~Dab{\`e}che.
\newblock Formes canoniques rationnelles d'un syst\`eme diff\'erentiel \`a
  point singulier irr\'egulier.
\newblock In {\em \'{E}quations diff\'eren\-tielles et syst\`emes de {P}faff
  dans le champ complexe}, volume 712 of {\em Lecture Notes in Math.}, pages
  20--32. 1979.

\bibitem{Danilevski1937}
A.~M. Danilevski.
\newblock The numerical solution of the secular equation.
\newblock {\em Matem. sbornik}, 44(2):169--171, 1937.
\newblock (in Russian).

\bibitem{Deligne1970}
P.~Deligne.
\newblock {\em \'{E}quations diff\'erentielles \`a points singuliers
  r\'eguliers}.
\newblock Lecture Notes in Math., Vol. 163. Springer, 1970.

\bibitem{Dickson1923}
L.~E. Dickson.
\newblock {\em Algebras and their arithmetics}.
\newblock Chicago, 1923.

\bibitem{DwRo80}
B.~Dwork and P.~Robba.
\newblock Effective {$p$}-adic bounds for solutions of homogeneous linear
  differential equations.
\newblock {\em Trans. Amer. Math. Soc.}, 259(2):559--577, 1980.

\bibitem{Gerhold02}
S.~Gerhold.
\newblock Uncoupling systems of linear {O}re operator equations.
\newblock Master's thesis, RISC, J. Kepler Univ. Linz, 2002.

\bibitem{GiesbrechtHeinle2012}
M.~Giesbrecht and A.~Heinle.
\newblock A polynomial-time algorithm for the {J}acobson form of a matrix of
  {O}re polynomials.
\newblock In 
{\em CASC'12}, volume 7442 of
{\em LNCS}, pages 117--128. Springer, 2012.

\bibitem{Hilali83}
A.~Hilali.
\newblock Characterization of a linear differential system with a regular
  singularity.
\newblock In {\em Computer algebra}, volume 162 of {\em Lecture Notes in
  Comput. Sci.}, pages 68--77. Springer, 1983.

\bibitem{Jacobson1937}
N.~Jacobson.
\newblock Pseudo-linear transformations.
\newblock {\em Ann. of Math. (2)}, 38(2):484--507, 1937.

\bibitem{Loewy1913}
A.~Loewy.
\newblock {\"U}ber lineare homogene {D}ifferentialsysteme und ihre {S}equenten.
\newblock {\em Sitzungsb. d. Heidelb. Akad. d. Wiss., Math.-naturw. Kl.},
  17:1--20, 1913.

\bibitem{Loewy1918}
A.~Loewy.
\newblock {\"U}ber einen {F}undamentalsatz f{\"u}r {M}atrizen oder lineare
  homogene {D}ifferentialsysteme.
\newblock {\em Sitzungsb. d. Heidelb. Akad. d. Wiss., Math.-naturw. Kl.},
  5:1--20, 1918.

\bibitem{Pech09}
L.~Pech.
\newblock Algorithmes pour la sommation et l’int{\'e}gration symboliques.
\newblock Master's thesis, 2009.

\bibitem{Poole1936}
E.~G.~C. Poole.
\newblock {\em Introduction to the theory of linear differential equations}.
\newblock Oxford Univ. Press, London, 1936.

\bibitem{Schlesinger1908}
L.~Schlesinger.
\newblock {\em Vorlesungen {\"u}ber lineare {D}ifferentialgleichungen}.
\newblock B. G. Teubner, Leipzig, 1908.

\bibitem{ScSt71}
A.~Sch{\"o}nhage and V.~Strassen.
\newblock {S}chnelle {M}ultiplikation gro\ss er {Z}ahlen.
\newblock {\em Computing}, 7:281--292, 1971.

\bibitem{Storjohann03}
A.~Storjohann.
\newblock High-order lifting and integrality certification.
\newblock {\em J. Symbolic Comput.}, 36(3-4):613--648, 2003.

\bibitem{Strassen69}
V.~Strassen.
\newblock {G}aussian elimination is not optimal.
\newblock {\em Numerische Mathematik}, 13:354--356, 1969.

\bibitem{VassilevskaWilliams12}
V.~{Vassilevska Williams}.
\newblock Multiplying matrices faster than {C}oppersmith-{W}inograd.
\newblock In {\em STOC'12}, pages 887--898, 2012. 

\bibitem{Wedderburn1932}
J.~H.~M. Wedderburn.
\newblock Non-commutative domains of integrity.
\newblock {\em J. Reine Angew. Math.}, 167:129--141, 1932.

\bibitem{Zurcher94}
B.~Z{\"u}rcher.
\newblock Rationale {N}ormalformen von pseudo-linearen {A}bbildungen.
\newblock Master's thesis, ETH Zürich, 1994.

\end{thebibliography}
\end{document}